\newcommand{\var}{\ensuremath{\mathsf{var}}}
\newcommand{\lit}{\ensuremath{\mathsf{Lit}}}
\newcommand{\B}{\ensuremath{\mathsf{B}}}
\newcommand{\cls}{\ensuremath{\mathsf{Cls}}}
\newcommand{\cnf}{\ensuremath{\mathsf{Cnf}}}
\newcommand{\clsset}{{\ensuremath{\cal{C}}}}
\newcommand{\fa}{\ensuremath{\mathsf{false}}}
\newcommand{\tr}{\ensuremath{\mathsf{true}}}
\newcommand{\F}{\ensuremath{\mathsf{f}}}
\newcommand{\T}{\ensuremath{\mathsf{t}}}
\newcommand{\node}{\ensuremath{\mathsf{Node}}}
\newcommand{\ph}{\ensuremath{\mathsf{P}}^h}
\newcommand{\pl}{\ensuremath{\mathsf{P}}^l}
\newcommand{\phc}{\ensuremath{\mathsf{\overline{P}}}^h}
\newcommand{\plc}{\ensuremath{\mathsf{\overline{P}}}^l}
\newcommand{\apply}{\ensuremath{\mathsf{Apply}}}
\newcommand{\ax}{\ensuremath{\mathsf{Axiom}}}
\newcommand{\jn}{\ensuremath{\mathsf{Join}}}
\newcommand{\pr}{\ensuremath{\mathsf{Projection}}}
\newcommand{\wk}{\ensuremath{\mathsf{Weakening}}}
\newcommand{\low}{\ensuremath{\mathsf{low}}}
\newcommand{\high}{\ensuremath{\mathsf{high}}}
\newcommand{\paths}{\ensuremath{\mathsf{Path}}}
\newcommand{\fc}{\ensuremath{\mathsf{\tau}}}
\newcommand{\pathsf}{\ensuremath{\mathsf{Path^f}}}
\newcommand{\rt}{\ensuremath{\mathsf{root}}}
\newcommand{\isomBDD}{\equiv}
\newcommand{\tvo}{\prec}
\newcommand{\res}{\ensuremath{\mathsf{res}}}
\newcommand{\respairs}{\ensuremath{\mathsf{Res}}}
\newcommand{\dec}{{\small{\ensuremath{\mathsf{Dec}}}}}
\newcommand{\newnode}{{\small{\ensuremath{\mathsf{Node}}}}}
\newcommand{\asg}{\ensuremath{\mathsf{A}}}
\newcommand{\af}{\ensuremath{\mathsf{F}}}
\newcommand{\rep}{\gamma}
\newcommand{\asset}{\mathcal{A}}
\newcommand{\less}{\unlhd}
\newcommand{\Tr}{\ensuremath{\mathsf{t}}}
\newcommand{\Fa}{\ensuremath{\mathsf{f}}}
\newcommand{\php}{\ensuremath{\mathsf{PHP}}}
\newcommand{\gphp}{\ensuremath{\mathsf{\overline{PHP}}}}
\begin{document}



\title{Resolution  Simulates Ordered Binary Decision Diagrams  \\ for  Formulas in  Conjunctive Normal Form}

\author{Olga Tveretina}
\institute{School of Computer Science\\
    University of Hertfordshire\\ United Kingdom\\
    \email{o.tveretina@herts.ac.uk}}

\date{}

\maketitle

\begin{abstract}
A classical question of propositional logic is one of the shortest proof of a tautology.  A related fundamental problem is to determine the relative efficiency of standard proof systems, where the relative complexity is  measured using the notion of polynomial simulation.

Presently, the state-of-the-art satisfiability algorithms are based on resolution in combination with search.
An Ordered Binary Decision Diagram (OBDD) is a data structure that is used to represent Boolean functions.

 Groote and Zantema have  proved that there is  exponential separation between resolution and a   proof system based on limited OBDD derivations. 
 However, formal comparison of these methods is not straightforward because OBDDs work on arbitrary formulas, whereas resolution can only be applied to formulas in Conjunctive Normal Form (CNFs). 

 Contrary to popular belief, we argue that resolution simulates OBDDs polynomially if we limit both to CNFs and thus answer negatively  the open question of Groote and Zantema   whether there exist unsatisfiable  CNFs  having polynomial OBDD
refutations and requiring exponentially long resolution refutations.

 \end{abstract}

\section{Introduction}

Propositional proof complexity is the study of the lengths of proofs of statements expressed as propositional formulas. It is tightly connected in many ways to computational complexity,  classical proof theory  and practical questions of  automated deduction.

A classical question of propositional proof complexity is one of the shortest proof of a tautology. A related fundamental problem is to determine the relative efficiency of standard proof systems, where the relative complexity is measured using the notion of polynomial simulation.

{\bf{Proof systems}.}
Propositional proof systems were defined by   Cook and Reckhow as polynomial-time functions which have as their range the set of all tautologies \cite{CR1979}.
They also noticed  that if there is no propositional proof system that admits proofs polynomial in size of the input formula then the complexity classes NP and co-NP are different, and hence P$\neq$ NP.

In \cite{AKV2004}, Atserias, Kolaitis and Vardi generalised the notion of a refutational propositional proof system,  viewing it as a special case of constraint propagation.
 Their proof system   consists  of  the following four rules: 
(1) $\ax$ defines the initial set of constraints;
(2) $\jn$ combines s  two constraints by intersecting   two relations and extending them   to all variables occurring in either one of them;
(3) $\pr$ computes the projection of a constraint which is the existential quantification;
(4) $\wk$ relaxes the constraint by enlarging its relation.

This generalisation  brings the methods of constraint propagation to the area of proof complexity. On the other hand it introduces new classes of proof systems.  The existing 
refutational proof systems can be viewed as a special case of this constraint propagation system.

{\bf{Efficiency of proof systems}.}
One of the most fundamental problems in the area of propositional
proof complexity is to determine the relative efficiency of standard
proof systems as it has been introduced  by Cook and Reckhow in \cite{CR1979} who found it useful to separate the idea of providing a proof from that being efficient.

  Proof systems are compared according to their strength using the notion of polynomial simulation.
A proof system $S_1$ 
simulates polynomially  a proof system $S_2$  if every tautology has
proofs in $S_1$ of size at most polynomially larger than in $S_2$. Proof systems $S_1$ and $S_2$ are equivalent if they simulate  each other polynomially.
%

 Although substantial
progress has been made in determining the relative complexity of proof
systems and in proving strong lower bounds for some relatively weak
proof systems, some major problems still remain unsolved.

As  it is defined by Razborov in    \cite{R03}, the question of existence of an efficient proof has to be separated from another important question how to  find such a proof efficiently and whether this search adds substantially to the inherent complexity of finding the shortest  proof in a specific proof system. It is formalised by the notion of automatizability: a proof system is automatizabile if it  produces a proof of a tautology  in time polynomial in the size of its smallest proof   \cite{BPR00}.

 The current  proof systems of practical use are not  automatizabile (or  just weakly automatizabile). That is why in addition to lower bounds, there is practical interest in understanding relative efficiency of somewhat  weaker proof systems. Interesting  examples of such systems are those  based either on general resolution or on classical OBDDs not utilising existential quantification.

{\bf{Resolution versus  OBDDs}.}
In the automated reasoning community
resolution and OBDDs are  popular  techniques for solving the  propositional satisfiability problem abbreviated as SAT.  In fact, both resolution and OBDDs  are families of algorithms, where each corresponds to a specific way of making choices. 

 Resolution  underlies  the vast majority of all proof search
techniques in this area. For example, the DPLL algorithm 
  \cite{DLL1962}, as well as the  
clause learning methods are highly optimised 
implementations of resolution \cite{PD09}.  It has been shown in \cite{PD10} that modern SAT solvers
simulate resolution polynomially.

 An
OBDD  is a canonical data structure
that is used for the symbolic representation of Boolean
functions~\cite{B1986,W2000}. Atserias et al. introduced and studied a proof system operating with OBDDs as a special case of constraint propagation \cite{AKV2004}.  They proved that OBDD based refutations polynomially simulate  resolution if they utilise existential quantification. That is  OBDD based  proof systems containing all four rules $\ax$, $\jn$, $\wk$ and $\pr$  are strictly stronger than
resolution but they are still exponential \cite{K2008}.

In the following we consider the OBDD proof system which  contains just  two rules,    $\ax$ and $\jn$. These rules are equivalent to    the $\apply$ operator as it is defined in \cite{B1986}.

Benchmark studies show incomparable behaviour of resolution and such OBDD based systems  \cite{US1994}.   
Groote and Zantema proved that resolution and OBDDs do not simulate
each other polynomially on arbitrary inputs for
 limited OBDD derivations  \cite{GZ2003}.   Tveretina, Sinz and Zantema strengthened the above
result and presented a class of CNFs hard for an arbitrary OBDD
derivation and easy for resolution \cite{TSZ10}.

In general, formal
comparison of resolution and OBDDs  is not straightforward because the later work
on arbitrary formulas, whereas resolution can take as an input only
CNFs. 
We argue that resolution simulates OBDDs polynomially if we limit both to CNFs. Thus we answer negatively    the  open question of Groote and Zantema posed in  \cite{GZ2003} whether there exist unsatisfiable  CNFs  having polynomial OBDD
refutations and requiring exponentially long resolution refutations.

{\bf{Previous  work}.}
There are several works which study the relative efficiency  of resolution based and OBDD based proof systems. The most relevant studies  to our setting are the following ones.

 Peltier shows in  \cite{P2008}  that  resolution augmented with the extension rule polynomially simulates OBDDs in the following sense: for any unsatisfiable formula $\varphi$ there exists a refutation of $\varphi$  with the size polynomially bounded by the maximal size of the reduced OBDDs corresponding to the subformulas occurring in $\varphi$.
As mentioned before,  Atserias et al  prove in   \cite{AKV2004} that  OBDD based refutations utilising existential quantification polynomially simulate resolution; moreover they are exponentially stronger. 
Groote and Zantema construct in \cite{GZ2003}  biconditinal formulas that have short OBDD refutations and after transforming them into CNFs they require exponentially long resolution proofs. But the same formulas after transformation into CNFs have exponentially long OBDD proofs.

{\bf{Main result}.}
We show that for any unsatisfiable CNF$\varphi$ there exists a resolution refutation of $\varphi$  with  the size polynomially bounded by the  size of an OBDD based refutation of $\varphi$ if it consists of  two rules $\ax$ and $\jn$ and uses two standard reduction rules, elimination and merging.  We now formally state the theorem.

\begin{theorem}\label{theorem:main2}
 Assume an unsatisfiable CNF $\varphi$. If there is an OBDD refutation of $\varphi$ with two rules $\ax$ and $\jn$ of size $n$ then there is a resolution refutation of $\varphi$  of size 
 $O(n^2)$.
\end{theorem}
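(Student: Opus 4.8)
The plan is to induct on the structure of the given refutation, simulating each of its operations by a short block of resolution inferences and arguing that the blocks compose into a refutation of size $O(n^2)$. Crucially, I will stay inside \emph{plain} resolution, working only with clauses over the original variables of $\varphi$; this is the essential difference from Peltier's construction in \cite{P2008}, which obtains a per-node clausal encoding only by paying with the extension rule. The subtlety to keep in mind throughout is that a function with a small OBDD need not admit any small CNF over the original variables, so I cannot afford to maintain a characterising clause set for the intermediate diagrams. Instead the argument must derive exactly the clauses that the refutation's structure forces, and the reduction rules will be the engine that produces them.

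The reading of clauses I would use is the natural one along a path from $\rt(B)$ to the false sink $\Fa$: leaving a node labelled $x_i$ along its $\low$ edge contributes the literal $x_i$, and along its $\high$ edge the literal $\neg x_i$. Every such $0$-path clause is falsified precisely on the assignments routed to $\Fa$, hence is entailed by $f_B$ and therefore by $\varphi$, so it is a sound target for resolution. For the base case, an application of $\ax$ produces the OBDD of a single clause $C_i \in \varphi$; such a clause-diagram has exactly one path to $\Fa$, namely $C_i$ itself, and no inference is needed. The point is that these $0$-path clauses are the clauses resolution will produce, but I never materialise the whole (possibly exponential) set — only those on the current frontier.

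The inductive case, simulating a single $\jn$ step $B = \re(\apply(B',B''))$, is the heart of the proof. Here $f_B = f_{B'} \wedge f_{B''}$, and the reductions $\RO$ (elimination) and $\MO$ (merging) collapse the product into $B$. The key observation is that each reduction is mirrored by a single resolution event: an elimination of a node testing $x_i$ whose $\low$ and $\high$ successors coincide is exactly a resolution on $x_i$ of the two clauses that branch on $x_i$ but share the remainder of their path, yielding the $x_i$-free clause of the shortened path; a merging of isomorphic subdiagrams only identifies clauses that already carry the same literals and so needs no inference. Tracing the construction in the order performed by $\apply$ therefore turns the reductions of the reduced diagram into a resolution derivation of the $0$-path clauses of $B$ from those of $B'$ and $B''$, all over the original variables.

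The refutation ends with an operation whose output is the false sink, and the only $0$-path clause of $\Fa$ is the empty path clause $\bot$, so the final block derives the empty clause. For the size bound I would note that the number of operations is $O(n)$ and aim to show each is simulated by $O(n)$ resolution inferences, giving $O(n^2)$ in total. The step I expect to be the main obstacle is precisely this inductive case: establishing the \emph{linear} per-operation bound, and doing so within plain resolution despite the fact that no polynomial characterising CNF of the intermediate diagrams need exist. This forces the accounting to follow the reduction operations $\RO$ and $\MO$ — the only polynomially many genuine resolution events — rather than the potentially exponentially many path clauses, while simultaneously verifying that every inference performed is sound and that all derived clauses remain over the variables of $\varphi$.
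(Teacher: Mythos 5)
Your skeleton matches the paper's (and Peltier's) starting point: read off falsified clauses along paths to the $\fa$ sink, simulate each elimination by resolving on the eliminated variable, observe that merging costs nothing, and finish because the only $0$-path of the $\fa$ sink gives $\bot$. But the step you yourself flag as ``the main obstacle'' --- a linear per-operation bound on the number of resolution inferences --- is precisely the content of the theorem, and your proposal contains no idea that delivers it. Your claim that each reduction is mirrored by a \emph{single} resolution event is false at the clause level: a node $p$ eliminated because $\low(p)\isomBDD\high(p)$ is traversed by pairs of $0$-paths $(\alpha.x.\beta,\ \alpha.\lnot x.\beta)$, one resolution per pair, and the number of such pairs --- hence of distinct resolvents on your ``frontier'' --- can be exponential in the size of the diagram. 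Counting reduction events (polynomially many node removals) does not bound the number of distinct clauses those removals force you to derive. There is a second unaddressed problem: after a $\jn$ step the clauses you hold are not the $0$-path clauses of the product, since each $0$-path of $\B_1\wedge\B_2$ merely \emph{extends} a $0$-path of one factor; resolution has no weakening, so you can only maintain clauses that subsume the path clauses, and the whole simulation must be argued modulo subsumption.

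The paper closes exactly this gap with machinery you would have to reinvent. It fixes a map $\af$ from $\pathsf(\B)$ to $\cls(\varphi)$ with $\alpha\not\models\af(\alpha)$ (the invariant $\varphi\less\B$, preserved under the conjunction algorithm by Corollary \ref{lem:alg}), so that an elimination resolves only pairs $(\af(\alpha.x.\beta),\af(\alpha.\lnot x.\beta))$ of \emph{input-labelled} clauses, skipping pairs where some existing clause already falsifies $\alpha.\beta$ (Lemma \ref{lemma:res_upperbound}). It then shows by a counting argument that at most $|\cls(\varphi)|$ distinct resolvents are ever needed per elimination (Lemma \ref{lemma:invariant}), and --- crucially --- that this bound is an invariant as resolvents accumulate and joins are performed, by tracking the quantity $|\pathsf(q)|-\fc(q,\varphi)$ through eliminations and through $\wedge$ (Lemmas \ref{lemma:mon1} and \ref{lemma:con}), yielding Corollary \ref{cor:ub}. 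Finally, the resulting $O(|\cls(\varphi)|\cdot n)$ bound becomes $O(n^2)$ only after passing to a minimally unsatisfiable subformula, which forces $|\cls(\varphi)|\leq n$ --- a step your proposal also lacks, since your hoped-for $O(n)$ per-operation bound is asserted rather than proved. Without an argument of this kind your induction does not yield a polynomial bound at all.
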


Our main argument is based on the  idea    that  the elimination rule can be simulated  by applying the resolution rule on the variable corresponding to the eliminated node  \cite{P2008}. But we use it differently.

We strengthen this idea and 
  prove that  the number of resolution steps corresponding to the elimination of a node  is  bounded by the number of clauses  in the input CNF $\varphi$. Moreover, we show that it is  an  invariant property: although    resolution steps generate new clauses,  the number of resolution steps needed to simulate   elimination of a node in an  intermediate OBDD remains  bounded by the number of clauses  of  $\varphi$ encoded by this OBDD.
Furthermore, we show that it is sufficient to simulate only the elimination rule, that is  the merging rule plays no role in the context.

{\bf{The remainder of the paper}.}
We give the necessary background in Section \ref{sec:preliminaries}. In Section \ref{sec:proof_systems}  we 
 introduce two proof systems of interest: one is based on resolution and the other corresponds to  OBDD derivations based on the Axion and Join rules.  
In Section \ref{sec:simulation} we show how to simulate the elimination rule using resolution   and in Section \ref{sec:main} we prove our main result.
Section \ref{sec:conclusion}  contains concluding remarks.

\section{Preliminaries}\label{sec:preliminaries}

\subsection{Propositional Logic and Conjunctive Normal Forms}

In this section we recall some basic notations about propositional logic and only provide a short overview of the main definitions. 

In the following we consider propositional formulas in  Conjunctive Normal Form (CNF)  built using 
variables from a set $\var$. A
literal $l$ is either a variable $x$ or its negation $\lnot x$ 
with $\var(l)=x$. A clause $C$  is a
disjunction of literals, and a CNF $\varphi$  is a conjunction of clauses. By $\cnf$ we denote the set of all CNFs.


 We define $\cls(\varphi)$
to be the set of clauses, $\lit(\varphi)$ the set of literals, and
$\var(\varphi)$ the set of variables contained in the CNF $\varphi$.

 We use
$\varphi|_l$ to denote the CNF obtained from $\varphi$ by deleting all clauses  containing a literal $l$ and removing $\lnot l$ from the rest of the clauses. Note that  $\varphi=\varphi|_{l}$ if $l\not\in \lit(\varphi)$.

A truth assignment is a function $\asg:\var\rightarrow
\{\tr,\fa\}$.  We denote by $\asset$ the set of all possible assignments.
The truth values of literals, clauses and CNFs are defined in a standard way.

We write $\asg\models \varphi$ if $\varphi$ evaluates to
$\tr$ for the assignment $\asg$, otherwise  we write $\asg\not\models \varphi$.  
We say that $\varphi$ is {unsatisfiable} if  $\asg\not\models \varphi$ for any $\asg\in\asset$, otherwise it is satisfiable; $\varphi$ is a {tautology} if  $\asg\models \varphi$ for any $\asg\in\asset$.

We say that two CNFs $\varphi$ and $\psi$ are logically equivalent, denoted $\varphi\equiv \psi$,  if $\asg \models \varphi$ if and only if $\asg \models \psi$ for any $\asg\in \asset$.

We use 
$\top$ for the empty set of clauses  and $\bot$ for  the CNF consisting of the empty clause. By definition the empty clause is unsatisfiable that is it is equivalent to $\fa$,  and the the empty set of clauses is equivalent to $\tr$.

\subsection{Ordered Binary Decision Diagrams}

The concept of ordered binary decision diagrams (OBDDs)  was first proposed by Lee
in~\cite{L59}  as a means to represent propositional formulas  (Boolean functions) compactly as
directed acyclic graphs (DAGs). Then it was further developed to a data structure by
Acers \cite{A78} and Boute \cite{B76}, and subsequently by Bryant  
\cite{B1986}.

\begin{definition}[An OBDD]\normalfont{
An OBDD  $\B$ is a directed acyclic graph satisfying the following:
\begin{enumerate}
\item  it has a unique node  called the root and denoted by  $\rt(\B)$;
\item  each inner node $p$ is  labeled by the  propositional variable $\var(p)$ and has  exactly two successors, a $\fa$-successor and a $\tr$-successor; 
\item the inner nodes build the set $\node(\B)$, and the labels build the set $\var(\B)$;
\item  each  leaf  node is  labeled by either  $\tr$ or $\fa$;  
\item there is a total variable order $\tvo$ such that for each transition from the inner node with label $x$ to the inner node with label $y$ we have that $x\tvo y$.
\end{enumerate}
}
\end{definition}

We use $\high(p)$ and $\low(p)$  to denote  the OBDDs rooted at the $\tr$-successor  and  the $\fa$-successor of $p$; 
$|\B|$ is  the size of
 $\B$, that is the number of its inner nodes.   
 
We use $\B_1\isomBDD \B_2$ to denote that $\B_1$ and $\B_2$ are isomorphic OBDDs defined  as follows:
\begin{itemize}
\item both $\B_1$ and $\B_2$  consist either  of  the node $\tr$ or  of the  node $\fa$;

\item $\var(\rt(\B_1))=\var(\rt(\B_2))$, $\high(\B_1)\isomBDD \high(\B_2)$ and $\low(\B_1)\isomBDD \low(\B_2)$.
\end{itemize}

 OBDD operations  are applicable only
  to OBDDs that respect the same variable ordering.  
  To shorten the
  notations in the rest of the paper,  we assume without explicitly stating that all
  the variables agree on the common variable order $x_1\tvo x_2 \tvo x_3\tvo \dots$ when considering different OBDDs
  in the same context.

\begin{definition}[Path]\normalfont{
 A \emph{path} of an OBDD $\B$ is a sequence
 $\alpha=l_1\ldots l_k$  of literals with  $k\geq 1$ such that  there are  $p_1, \ldots, p_{k}\in\node(\B)$, where
 \begin{itemize}
 \item $p_1=\rt(\B)$;
 \item  for $1\leq i< k$, either $p_{i+1}=\rt(\high(p_i))$ and $l_i=\var(p_i)$,  or  $p_{i+1}=\rt(\low(p_i))$ and $l_i=\neg \var(p_i)$;
 \item $\high(p_k)\in\{\fa,\tr\}$ if $l_k =\var(p_k)$;
 \item $\low(p_k)\in\{\fa,\tr\}$ if $l_k =\lnot \var(p_k)$. 
  \end{itemize}
}
\end{definition}

We use   
$\paths(\B)$  to denote the set of all paths of $\B$, and  $\pathsf(\B)$ to denote the set of all paths that go  to the $\fa$ node.
By  $\paths(p)$ we mean  the set of all paths that go through the inner node $p$, and  $\pathsf(p)=\paths(p) \cap \pathsf(\B)$.

A path can be seen  as a conjunction  of 
literals.  In this way, each  path $\alpha=l_1 \ldots l_k$ naturally induces the set $\asset(\alpha)$ of
truth assignments  evaluating each $l_i$ to
$\tr$. 
We write $\alpha\not\models C$ for a clause $C$ if $A\not\models C$ for any $A\in \asset(\alpha)$. 
Where it is convenient, we see a path as a set of literals and use the set notations.

We say that a CNF $\varphi$ and an OBDD $\B$ are logically equivalent if for every assignment  $\asg\in\asset$, $\alpha\not\models \varphi$ if and only if  there is a path $\alpha\in\pathsf(\B)$ such that $\asg\in \asset(\alpha)$.

For any CNF $\varphi$ and  OBDD $\B$,  we use   $\varphi\less \B$ to denote that $\varphi$ and $\B$ are logically equivalent and for each path $\alpha\in \pathsf(\B)$ there is a clause $C\in\cls(\varphi)$ such that $\alpha\not\models C$.

\subsection{OBDD Construction}

The straightforward  way to construct an OBDD  is to start with  a binary decision
tree and then incrementally eliminate redundancies and identify identical
subtrees.
The other more efficient way
 follows the structure of the propositional formula. Such algorithms start with building
OBDDs for variables or literals, and then construct more complex OBDDs by using OBDD operations for logical connectives. 

Algorithm \ref{algo:and}  presented  below  (from   \cite{SB2006}) takes as an input   two OBDDs $\B_1$ and $\B_2$ and returns their conjunction denoted by $\B_1\land \B_2$. 
It proceeds from the root downward creating vertices in the resulting graph
as follows: 
\begin{enumerate}
\item the function $\dec$ 
 decomposes  a non-terminal OBDD node into its constituent components, that is 
its variable and cofactors;

\item  the function $\newnode$ constructs a new OBDD node if it is
not already present, and otherwise returns the already existent node. 
\end{enumerate}
\begin{algorithm}\label{algo:and}  \caption{The algorithm for constructing $\B_1\land \B_2$}
 \KwData{$\B_1,\B_2$}
 \KwResult{$\B_1\land \B_2$}
   
   \If {$\B_1=\fa$ or $\B_2=\fa$}
    {return $\fa$\;}

\If {$\B_1=\tr$}
{ return $\B_2$\;}

\If {$\B_2=\tr$} 
{ return $\B_1$\;}

   $(x,\high(\B_1),\low(\B_1))=\dec(\B_1)$;
   
   $(y,\high(\B_2),\low(\B_2))=\dec(\B_2) $;

\If {$x=y$} 
{return  $\newnode(x, \high(\B_1)\wedge \high(\B_2), \low(\B_1)\wedge \low(\B_2))$\;}

\If {$x\tvo y$} 
{return  $\newnode(x,\high(\B_1)\wedge\B_2, \low(\B_1)\wedge\B_2) $\;}

\If {$y\tvo x$} 
{return  $\newnode(y,\B_1\wedge \high(\B_2),\B_1\wedge \low(\B_2)) $\;}

\end{algorithm}

Lemma \ref{lem:alg1} is a technical lemma which will be used  to prove Corollary \ref{lem:alg}. It follows relatively  straightforwardly from the definition of Algorithm \ref{algo:and}. 

\begin{lemma}\label{lem:alg1} Assume OBDDs $\B_1$ and $\B_2$ such that $\var(\B_1)\neq \emptyset$ and  $\var(\B_2)\neq \emptyset$.   Then Algorithm \ref{algo:and} returns the OBDD $\B_1\land \B_2$ such that 
\begin{itemize}
\item for each $\alpha \in \pathsf(\B_1\land \B_2)$ there is  $\beta\in \pathsf(B_1)\cup \pathsf(B_2)$ such that
$\beta \subseteq \alpha$;
\item for each $\beta\in \pathsf(B_1)\cup \pathsf(B_2)$ there is  $\alpha \in \pathsf(\B_1\land \B_2)$ such that
$\beta \subseteq \alpha$.
\end{itemize}
\end{lemma}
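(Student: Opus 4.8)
The plan is to prove Lemma~\ref{lem:alg1} by induction on the recursion tree of Algorithm~\ref{algo:and}, since the algorithm is defined recursively on the structure of the two input OBDDs and the set $\pathsf(\B_1 \land \B_2)$ is built up from the paths produced in the recursive calls. The key observation is that a $\fa$-path of the conjunction is a root-to-$\fa$ sequence of literals, and at each node of the resulting OBDD the algorithm has chosen a branching variable that is either the top variable of $\B_1$, of $\B_2$, or (in the $x=y$ case) of both. So I would track how a path $\alpha$ of $\B_1 \land \B_2$ decomposes, at its first literal, into the literal chosen by the branch taken in the algorithm together with the remaining suffix of $\alpha$, which is itself a $\fa$-path of one of the recursively constructed conjunctions.

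\medskip

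First I would set up the induction and handle the base cases. Since $\var(\B_1)\neq\emptyset$ and $\var(\B_2)\neq\emptyset$, neither input is a leaf at the top level, but recursive calls can reach leaves, so the base cases are the first three \texttt{if} branches where one argument is $\fa$ or $\tr$. When some argument is $\fa$ the result is $\fa$, which has no inner nodes and hence $\pathsf$ behaves trivially for the suffix bookkeeping; when $\B_1=\tr$ the result is $\B_2$ and the claim is an identity on $\pathsf(\B_2)$, and symmetrically for $\B_2=\tr$. These cases are routine and establish the anchor for both directions of the biconditional.

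\medskip

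For the inductive step I would treat the three remaining cases ($x=y$, $x\tvo y$, $y\tvo x$) and argue both inclusions. For the first bullet, take $\alpha\in\pathsf(\B_1\land\B_2)$ and strip its leading literal $l_1$, which records whether the recursion descended into the \high\ or \low\ cofactor; the suffix $\alpha'$ lies in $\pathsf$ of the corresponding recursive conjunction, so by the induction hypothesis there is $\beta'\in\pathsf$ of one of the two child OBDDs with $\beta'\subseteq\alpha'$, and either $\beta'$ already witnesses the claim (when the relevant child is an unchanged copy of $\B_1$ or $\B_2$, as in the $x\tvo y$ and $y\tvo x$ cases) or prepending $l_1$ to $\beta'$ reconstructs a path $\beta$ of the full $\B_1$ or $\B_2$ with $\beta\subseteq\alpha$ (the $x=y$ case and the descending-branch cases). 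The second bullet is the converse: given $\beta\in\pathsf(\B_1)\cup\pathsf(\B_2)$, I would peel off its first literal to see which cofactor of $\B_1$ (or $\B_2$) it continues into, match that against the branch the algorithm takes on the top variable, apply the induction hypothesis to obtain a suffix in $\pathsf$ of the child conjunction, and prepend the branch literal to land inside $\pathsf(\B_1\land\B_2)$ with $\beta\subseteq\alpha$.

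\medskip

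The main obstacle is the careful variable-ordering bookkeeping in the asymmetric cases $x\tvo y$ and $y\tvo x$. There the algorithm branches on the smaller variable $x$ (respectively $y$) while carrying the \emph{entire} other OBDD unchanged into both recursive calls, so a path of $\B_2$ that does not start with $x$ must be threaded through without its leading literal being consumed, and I must verify that the subset relation $\beta\subseteq\alpha$ is preserved even though $\alpha$ may acquire literals over variables strictly below those appearing in $\beta$. Keeping the two inclusions from collapsing into each other, and ensuring the prepended branch literal is consistent with the ordering $\tvo$ so that the reconstructed sequence is genuinely a path, is where the proof needs the most attention; the rest follows mechanically from the case structure of the algorithm.
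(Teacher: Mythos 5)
Your overall route---induction following the recursion of Algorithm \ref{algo:and}, with a case split on the least top variable and an appeal to the induction hypothesis on the cofactor conjunctions---is essentially the paper's own proof, which inducts on $k=|\var(\B_1)\cup\var(\B_2)|$ and distinguishes whether the smallest variable occurs in both operands or only one. The genuine problem lies in the step you wave off as ``routine'': the base case where a recursive call hits the first \emph{if}-branch and returns $\fa$. There $\pathsf$ of the returned OBDD is empty, while the OBDD carried into that call may still have a nonempty $\pathsf$; hence the second bullet of the induction hypothesis is \emph{false} for that subcall, and your recipe for the converse direction (``apply the induction hypothesis to obtain a suffix in $\pathsf$ of the child conjunction, and prepend the branch literal'') has nothing to apply to. The witness $\alpha$ would have to be found in the \emph{sibling} branch, which your matching of $\beta$'s first literal against ``the branch the algorithm takes'' does not deliver, since the algorithm recurses into both branches and $\beta$ may survive in neither.

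Indeed the second bullet fails outright for arbitrary inputs, so no bookkeeping can rescue the induction as planned. Take $\B_1$ the OBDD of $x\wedge y$ (root labelled $x$ with $\low$-successor $\fa$ and $\high$-successor a $y$-node with successors $\tr$ and $\fa$) and $\B_2$ the OBDD of $\lnot x$, with $x\tvo y$; both variable sets are nonempty, so the lemma applies. The algorithm returns $\newnode(x,\fa,\fa)$, since both cofactor calls collapse to $\fa$ via the first \emph{if}-branch. Then $\pathsf(\B_1\wedge\B_2)=\{x,\;\lnot x\}$, but the path $\beta=x.\lnot y\in\pathsf(\B_1)$ is contained in neither, so no $\alpha\supseteq\beta$ exists. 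Any correct proof must therefore weaken the second bullet or restrict the class of inputs (e.g., to the clause-encoding OBDDs actually used in the refutations, noting that the later Corollary \ref{lem:alg} only needs the first bullet, which your argument does establish soundly). To be fair, the paper's own proof---basis $k=0$, two displayed cases, ``we use the induction hypothesis and conclude''---glosses over exactly the same collapse, so you have reproduced the paper's gap rather than introduced a new one; but as a standalone argument your plan fails at precisely the point you marked as trivial.
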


\begin{proof} We give a proof by induction on $k=|\var(\B_1)\cup \var(\B_2)|$.
For the basis step we choose $k=0$ and the lemma trivially holds.

We assume that the lemma holds for any $\B_1'$ and $\B_2'$ such that  $|\var(\B'_1)\cup \var(\B'_2)|\leq k-1$. Let  $|\var(\B_1)\cup \var(\B_2)|=k$ and  $x\in \var(\B_1)\cup \var(\B_2)$ be the smallest variable. 
Then by the  definition of an OBDD 
$|\var(\high(\B_1\wedge \B_2))| \leq k-1$ and
$|\var(\low(\B_1\wedge \B_2))| \leq k-1$.

If  $x\in \var(\B_1)$ and $x\in \var(\B_2)$ then by the definition of Algorithm  \ref{algo:and} it  returns the OBDD $\B_1\wedge \B_2$ such that 
\begin{itemize}
\item $\var(\rt)=x$; 

\item $\high(\B_1\wedge \B_2)=\high(\B_1)\wedge \high(\B_2)$  and $\low(\B_1\wedge \B_2)=\low(\B_1)\wedge \low(\B_2)$.

\end{itemize}

If $x\in \var(\B_1)$ and $x\not\in \var(\B_2)$ then
 by the definition of Algorithm  \ref{algo:and} it  returns  the OBDD $\B_1\wedge \B_2$ such that 
\begin{itemize}
\item $\var(\rt)=x$;
\item $\high(\B_1\wedge \B_2)=\high(\B_1)\wedge \B_2$ and $\low(\B_1\wedge \B_2)=\low(\B_1)\wedge \B_2$.
\end{itemize}
We use the induction hypothesis and conclude that the lemma holds.

\end{proof}        

The following corollary  is a direct consequence of Lemma \ref{lem:alg1} and it will be used later to prove the main result.

\begin{corollary}\label{lem:alg} Assume CNFs $\varphi_1$ and $\varphi_2$ and OBDDs $\B_1$ and $\B_2$ such that  $\varphi_1\less \B_1$ and $\varphi_2\less\B_2$.   Then Algorithm \ref{algo:and} returns the OBDD $\B_1\land \B_2$ such that 
   $\varphi_1\land \varphi_2\less\B_1\land \B_2$.
\end{corollary}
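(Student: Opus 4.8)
The plan is to verify separately the two conditions packaged in the relation $\less$: first, logical equivalence of $\varphi_1\land\varphi_2$ with $\B_1\land\B_2$, and second, the path--clause condition that every $\alpha\in\pathsf(\B_1\land\B_2)$ is refuted by some clause of $\varphi_1\land\varphi_2$. Throughout I view paths as sets of literals, so that $\beta\subseteq\alpha$ means $\asset(\alpha)\subseteq\asset(\beta)$, and I rely on the elementary monotonicity fact that if $\beta\subseteq\alpha$ and $\beta\not\models C$ then $\alpha\not\models C$, since every assignment consistent with $\alpha$ is consistent with $\beta$ and hence falsifies $C$. I also use the obvious identity $\cls(\varphi_1\land\varphi_2)=\cls(\varphi_1)\cup\cls(\varphi_2)$.

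For the path--clause condition, which is the substantive part, I would fix $\alpha\in\pathsf(\B_1\land\B_2)$ and apply the first item of Lemma~\ref{lem:alg1} to obtain a path $\beta\in\pathsf(\B_1)\cup\pathsf(\B_2)$ with $\beta\subseteq\alpha$; say $\beta\in\pathsf(\B_1)$, the other case being symmetric. Since $\varphi_1\less\B_1$, there is a clause $C\in\cls(\varphi_1)$ with $\beta\not\models C$. By the monotonicity fact $\alpha\not\models C$, and $C\in\cls(\varphi_1)\subseteq\cls(\varphi_1\land\varphi_2)$, which is exactly what the second condition of $\less$ requires.

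For logical equivalence I would combine the given hypotheses $\varphi_1\equiv\B_1$ and $\varphi_2\equiv\B_2$ with the correctness of Algorithm~\ref{algo:and}, which returns an OBDD computing the conjunction of the functions represented by $\B_1$ and $\B_2$; hence $\B_1\land\B_2\equiv\varphi_1\land\varphi_2$. Spelled out in the falsifying-assignment formulation, an assignment $\asg$ falsifies $\varphi_1\land\varphi_2$ iff it falsifies $\varphi_1$ or $\varphi_2$, iff, by the hypotheses, $\asg\in\asset(\beta)$ for some $\beta\in\pathsf(\B_1)\cup\pathsf(\B_2)$, and one must match this against the existence of a path $\alpha\in\pathsf(\B_1\land\B_2)$ with $\asg\in\asset(\alpha)$.

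I expect this last matching to be the main obstacle. The implication that $\asg$ reaching $\fa$ in $\B_1\land\B_2$ forces $\asg$ to falsify $\varphi_1$ or $\varphi_2$ follows from the first item of Lemma~\ref{lem:alg1}, via $\beta\subseteq\alpha$ and $\asset(\alpha)\subseteq\asset(\beta)$. The converse is not delivered by the second item of Lemma~\ref{lem:alg1} alone, since that item only produces a path $\alpha\supseteq\beta$, giving $\asset(\alpha)\subseteq\asset(\beta)$ rather than $\asg\in\asset(\alpha)$. Here I would instead appeal to the canonicity of the construction: each total assignment traces a unique root-to-leaf path in $\B_1\land\B_2$, and correctness of the conjunction guarantees that this path ends at $\fa$ exactly when $\asg$ falsifies $\varphi_1\land\varphi_2$. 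Once equivalence is established, combining it with the path--clause condition yields $\varphi_1\land\varphi_2\less\B_1\land\B_2$.
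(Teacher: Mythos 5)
Your proof is correct and takes essentially the same route as the paper, which offers no written argument at all: it simply declares the corollary a ``direct consequence of Lemma~\ref{lem:alg1}'', and your write-up is the natural expansion of exactly that assertion (the path--clause half of $\less$ from the first item of Lemma~\ref{lem:alg1} plus literal-set monotonicity, and logical equivalence from the standard semantic correctness of Algorithm~\ref{algo:and}). Your observation that the second item of Lemma~\ref{lem:alg1} does not by itself deliver one direction of the equivalence, so that correctness of the conjunction (each total assignment tracing a unique root-to-leaf path) must be invoked, is a genuine subtlety the paper silently elides, not a departure from its approach.
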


\subsection{Reduction Rules}

There  are two  reduction rules  not affecting the semantics of OBDDs
that can be  used to reduce the size of the OBDDs constructed by  Algorithm \ref{algo:and} defined as follows.

\begin{itemize}
\item  {\bf Merging:} If  $\low(p)\isomBDD\low(q)$ and $\high(p)\isomBDD\high(q)$ for $p,q\in\node(\B)$ then the node $p$ can be removed. Any link to the node $p$  is replaced by a link to the node $q$.

\medskip

 \item {\bf Elimination:} If  $\low(p)\isomBDD\high(p)$  for $p\in\node(\B)$ then the node $p$ can be removed. Any link to the node $p$ is replaced by a link to the root of 
$\high(p)$.  
We write $\B\rightarrow_p \B'$ 
 if $\B'$ is obtained from $\B$ by  eliminating the node $p$.   
 \end{itemize}

We use $\B^{\downarrow}$ to denote the reduced OBDD obtained from $\B$, that is   no reduction rule can be applied to $\B^{\downarrow}$ any more.

\begin{lemma}[Bryant \cite{B1986}]\label{prop:equiv} If $\B_1$ and $\B_2$ are  logically equivalent OBDDs then $\B_1^{\downarrow}\isomBDD \B_2^{\downarrow}$. 
\end{lemma}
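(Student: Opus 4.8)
The plan is to derive the statement from the \emph{canonicity} of fully reduced OBDDs. Since both reduction rules preserve the represented Boolean function, $\B_1^{\downarrow}$ and $\B_2^{\downarrow}$ are fully reduced OBDDs (no rule applies) representing the same functions as $\B_1$ and $\B_2$, hence, by the hypothesis of logical equivalence, the same function as one another. It therefore suffices to prove the following claim: \emph{any two fully reduced OBDDs that are logically equivalent are isomorphic}. (As a by-product this also justifies the notation $\B^{\downarrow}$, since it shows the reduced form is unique up to $\isomBDD$.) I would prove the claim by induction on $n$, where $\{x_1,\ldots,x_n\}$, ordered by $\tvo$, is a set containing all variables occurring in either OBDD.

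Before the induction I would record two routine facts. First, every sub-OBDD $\high(p)$ or $\low(p)$ of a fully reduced OBDD is itself fully reduced, since an applicable merging or elimination inside a sub-OBDD would be applicable in the whole OBDD. Second, by the ordering condition~(5) the $\tvo$-smallest variable can label only the root: if a node labeled $x_1$ occurred strictly below the root, the transition entering it would originate at a node with a $\tvo$-smaller label, which is impossible. Consequently, the high and low sub-OBDDs of the root exactly represent the two cofactors (restrictions) of the represented function according to the value of the root variable, and these sub-OBDDs range over variables in $\{x_2,\ldots,x_n\}$.

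For the induction, the base case $n=0$ is immediate: each OBDD is a single leaf, and logical equivalence forces both to be $\tr$ or both $\fa$, so they are isomorphic. For the step, consider the smallest variable $x_1$ and split on the root labels of $\B_1$ and $\B_2$. If neither root is labeled $x_1$, then by the second fact $x_1$ occurs in neither OBDD, so both are OBDDs over $\{x_2,\ldots,x_n\}$ and the induction hypothesis applies directly. If both roots are labeled $x_1$, then $\high(\B_1)$ and $\high(\B_2)$ represent the same cofactor and are fully reduced OBDDs over $\{x_2,\ldots,x_n\}$, so by the induction hypothesis $\high(\B_1)\isomBDD\high(\B_2)$, and symmetrically $\low(\B_1)\isomBDD\low(\B_2)$; together with the common root label $x_1$ this yields $\B_1\isomBDD\B_2$ by the definition of $\isomBDD$.

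The crux --- and the step I expect to be the main obstacle --- is the remaining case, in which exactly one root, say $\rt(\B_1)$, is labeled $x_1$. Here I would argue for a contradiction: since $\rt(\B_2)$ is not labeled $x_1$, the variable $x_1$ is absent from $\B_2$, so the common function is independent of $x_1$, whence the two $x_1$-cofactors of $\B_1$ coincide as functions. Thus $\high(\B_1)$ and $\low(\B_1)$ are logically equivalent fully reduced OBDDs over $\{x_2,\ldots,x_n\}$, so the induction hypothesis gives $\high(\B_1)\isomBDD\low(\B_1)$. But then the elimination rule applies to $\rt(\B_1)$, contradicting that $\B_1$ is fully reduced. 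This rules out the mixed case and completes the induction. The delicate point throughout is keeping the ordering condition and the hereditary reducedness of sub-OBDDs aligned, so that the induction hypothesis can legitimately be invoked on the cofactors and so that differing root labels are genuinely forbidden once the function depends on the smallest variable.
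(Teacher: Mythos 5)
The paper offers no proof of this lemma at all: it is imported verbatim from Bryant \cite{B1986} as a known result, so there is no internal argument to compare against. Your proposal is a correct, self-contained reconstruction of the classical canonicity theorem, and it is essentially Bryant's original induction: strip the $\tvo$-smallest variable, use hereditary reducedness of sub-OBDDs and the observation that the smallest occurring variable can label only the root, dispatch the case of two equal root labels via the induction hypothesis on the cofactors, and rule out the mixed case by showing the elimination rule would apply to the offending root --- which is indeed the crux, and which also silently disposes of the degenerate situation where one reduced OBDD is a leaf and the other is not (a reduced OBDD of a constant function must be a single leaf, and your mixed-case contradiction delivers exactly that). Two small points would deserve a sentence in a full write-up, though neither is a gap in substance: the paper only defines logical equivalence between a CNF and an OBDD, so you are tacitly using the standard notion that two OBDDs are equivalent when they agree on every assignment, which is surely what the lemma intends; and your claim that a node labeled with the smallest variable cannot sit strictly below the root presupposes that every node of the DAG is reachable from the unique root, which is implicit in the paper's definition of an OBDD. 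With those conventions made explicit, your induction is sound and proves exactly the statement $\B_1^{\downarrow}\isomBDD\B_2^{\downarrow}$, including the uniqueness of the reduced form that the notation $\B^{\downarrow}$ presumes.
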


The  total time complexity of the algorithm is $O(|\B_1|\times |\B_2|)$. In the worst case,  the upper bound is achieved and  $\B_1\wedge \B_2$ can contain $O(|\B_1|\times |\B_2|)$ nodes.

\begin{theorem} [Bryant \cite{B1986}]\label{theorem:size} Let $\B_1$ and  $\B_2$ be two reduced  OBDDs, that is $\B_1\isomBDD \B_1^{\downarrow}$ and $\B_2\isomBDD \B_2^{\downarrow}$. Then
 the size of $\B_1\wedge\B_2$ is  $O(|\B_1|\times |\B_2|)$, and 
 the number of merging and elimination  steps to compute the reduced OBDD corresponding to $\B_1\wedge\B_2$ is at most  $|\B_1\wedge\B_2|$.
\end{theorem}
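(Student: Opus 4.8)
The plan is to treat the two assertions separately: a size bound on the graph returned by Algorithm~\ref{algo:and}, and a counting bound on the reductions that canonicalize it. The first rests on a correspondence between the nodes created during the conjunction and pairs of cofactors of the inputs; the second is a simple accounting argument exploiting that every reduction rule strictly decreases the node count.

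For the size bound I would argue by induction on $k=|\var(\B_1)\cup\var(\B_2)|$, in the same style as Lemma~\ref{lem:alg1}. The invariant is that every inner node produced by a call to $\newnode$ in Algorithm~\ref{algo:and} is determined by a pair $(p,q)$ with $p\in\node(\B_1)\cup\{\tr,\fa\}$ and $q\in\node(\B_2)\cup\{\tr,\fa\}$, namely the roots of the two subOBDDs on which that recursive call operates. Because $\B_1$ and $\B_2$ are reduced, distinct inner nodes denote distinct cofactor functions, so the subproblems reachable from $(\rt(\B_1),\rt(\B_2))$ are indexed injectively by such pairs. Since $\newnode$ returns an already existing node whenever the triple $(x,\high,\low)$ is present, each pair contributes at most one inner node to $\B_1\wedge\B_2$. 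Counting the admissible pairs gives at most $(|\B_1|+2)(|\B_2|+2)$ inner nodes, hence $|\B_1\wedge\B_2|=O(|\B_1|\times|\B_2|)$; the reduced form is no larger, so the same bound holds after canonicalization.

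For the reduction-step bound I would observe that the merging and elimination rules each act locally and each removes exactly one node: merging deletes $p$ after redirecting links to $q$, and elimination deletes $p$ after redirecting links to $\rt(\high(p))$. Neither rule ever introduces a node, so after $m$ applications exactly $m$ nodes have disappeared. Therefore $m$ is bounded by the number of inner nodes originally present, i.e.\ $m\le |\B_1\wedge\B_2|$ (in fact $m\le |\B_1\wedge\B_2|-|(\B_1\wedge\B_2)^{\downarrow}|$). Lemma~\ref{prop:equiv} guarantees the end result is independent of the order in which the rules are applied, so the bound holds for any reduction strategy.

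The step I expect to be the main obstacle is the injectivity used in the size bound: one must argue that the recursion of Algorithm~\ref{algo:and} never materializes two different result nodes for the same pair of cofactors. This is exactly where reducedness of the inputs and the hash-consing behaviour of $\newnode$ are essential---without them the naive recursion could revisit a subproblem many times and the $O(|\B_1|\times|\B_2|)$ bound would fail. Making the node-to-pair correspondence precise (equivalently, that distinct nodes of a reduced OBDD represent distinct functions, a form of canonicity underlying Lemma~\ref{prop:equiv}) is the technical heart of the argument; the remaining counting and the reduction bound are routine.
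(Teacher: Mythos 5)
Your proposal is correct and takes essentially the same approach as the paper: the size bound follows because every node of $\B_1\wedge\B_2$ arises from a pair consisting of a subOBDD of $\B_1$ and a subOBDD of $\B_2$, and the reduction bound follows because each merging or elimination step strictly decreases the number of nodes. One small remark: for the upper bound you only need that every result node is \emph{determined by} some pair (a surjection from pairs onto nodes, guaranteed by the hash-consing of $\newnode$), not the injectivity you worry about, so reducedness of the inputs is not actually essential to the size bound --- it matters for the time complexity of the algorithm, not for the size of its output.
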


\begin{proof}
\begin{enumerate}
\item By definition, any subOBDD of $\B_1\wedge\B_2$ is of the form $\B'_1\wedge\B'_2$, where $\B'_1$ and $\B'_2$ are subOBDDs  of $\B_1$ and $\B_2$ respectively. Hence, the size of $\B_1\wedge\B_2$ is bounded by $O(|\B_1|\times |\B_2|)$.

\item This is immediate since the elimination and merging rules strictly decrease the number of nodes.
\end{enumerate}

\end{proof}

 \subsection{Notations and Technical Background}

Now we  define notations that will be  used in the rest of the paper and introduce some simple technical background  we need  to prove the main result. 
Thus, some additional properties related to the construction of an OBDD by Algorithm \ref{algo:and} are introduced in Lemma \ref{lemma:paths-pairs}  and straightforward combinatorial results   are defined in Lemma \ref{lemma:comb}.

Let   $\varphi$ be a CNF  and $\B$  be an OBDD such that $\varphi\less \B$. 
We  tacitly assume a  function $$\af: \pathsf(\B)\rightarrow \cls(\varphi)$$  such that 
$\alpha\not\models \af(\alpha)$ for an $\alpha\in \pathsf(\B)$.  For each node $p\in\node(\B)$, we define the set
$$\cls(p,\varphi)=\{C\in\cls(\varphi)\mid \exists \alpha\in\pathsf(p): C=\af(\alpha) \}$$
 For each $C \in \cls(\varphi)$ and $p\in\node(\B)$, we define 
$$ \rep(p,C)=|\{\alpha\in \pathsf(p)\mid \af(\alpha)=C \}| -1$$
and
$$\fc(p, \varphi)=\sum_{C\in \cls(p,\varphi)} \rep(p,C)$$

  Suppose 
 $\high(p)\isomBDD \low(p)$  and $\B\rightarrow_{p}\B'$  for a node $p\in \node(\B)$ and an  OBDD $\B'$. 
The set    $\respairs(p,\varphi)$ is defined  as follows:
 $$\respairs(p,\varphi)=\{\psi\in \cnf \mid  \varphi\vdash_{\res} \psi \mbox{ and } \varphi\land \psi\less \B'\}$$

Let $x=\var(p)$ and $q^h\in\node(\high(p))$ and $q^l\in\node(\low(p))$. We write  $$q^h\sim_p q^l$$ 
to denote that  $\alpha.x. \beta \in\pathsf(q^h)$ if and only if
$\alpha.\lnot x. \beta \in\pathsf(q^l)$ for some $\alpha$ and $\beta$.

Let  $\B\rightarrow_p \B'$.  
For simplicity,  we define informally  what we mean by  $q\rightarrow_p q'$. The OBDD $\B'$ contains in fact   the same nodes as $\B$ except the node $p$. 
We write $$q\rightarrow_p q'$$ to denote  that the node $q'\in \node(\B')$ is in fact the node  $q\in \node(\B)$ after renaming as $q'$.

In the following we assume that clauses of $\varphi$ are not subsumed while constructing the OBDD encoding it using Algorithm \ref{algo:and}. This is formalised with Lemma \ref{lemma:paths-pairs} below.

\begin{lemma}\label{lemma:paths-pairs}  Assume a CNF $\varphi$ and an OBDD $\B$ such that $\varphi\less \B$.  Let $\high(p)\isomBDD \low(p)$ for some $p\in\node(\B)$ with $x=\var(p)$. 
Suppose  the following holds:
\begin{enumerate}
\item\label{as1} $C\in\cls(\high(p),\varphi) $ and $D\in\cls(\low(p),\varphi)$;

\item\label{as3} Let $C=\lnot x\vee C'$ and $D=x\vee D'$. Then for any literal $l$, if 
$l\in\lit(C')$ then $\lnot l\not\in\lit(D')$;
\item \label{as4} There is $x'\in\var(C)\cap\var(D)$ such that for any $x''\in var(C)\cup\var(D)\backslash\{x'\}$, $x''\tvo x'$.

\end{enumerate}
Then for some $\alpha_1,\alpha_2$ there are $\alpha_1.x.\alpha_2 \in \pathsf(\high(p))$ and $\alpha_1.\lnot x.\alpha_2\in \pathsf(\low(p))$ such that 
$$\alpha_1.x.\alpha_2\not \models C$$ $$\alpha_1.\lnot x.\alpha_2 \not\models D$$

\end{lemma}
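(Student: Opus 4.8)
The plan is to start from arbitrary path witnesses for $C$ and $D$ and then force them onto a common prefix--suffix pair through the node $p$, using the isomorphism $\high(p)\isomBDD\low(p)$ together with the maximality of $x'$.

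First I would unfold Assumption \ref{as1}. By the definition of $\cls(\high(p),\varphi)$ and $\cls(\low(p),\varphi)$ there are false-paths $\beta^h\in\pathsf(\high(p))$ and $\beta^l\in\pathsf(\low(p))$ with $\af(\beta^h)=C$ and $\af(\beta^l)=D$, so $\beta^h\not\models C$ and $\beta^l\not\models D$. Writing $\beta^h=\mu_1.x.\mu_2$ and $\beta^l=\nu_1.\lnot x.\nu_2$, the literal $\lnot x$ of $C$ and the literal $x$ of $D$ are already falsified by the branch taken at $p$, so $\mu_1.\mu_2$ falsifies $C'$ and $\nu_1.\nu_2$ falsifies $D'$. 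Since a path can falsify a clause only if it assigns every variable of that clause, every variable of $C$ occurs on $\beta^h$ and every variable of $D$ on $\beta^l$; in particular the variable $x'$ of Assumption \ref{as4} is decided on both paths. The goal then reduces to finding a single prefix $\alpha_1$ reaching $p$ and a single suffix $\alpha_2$ in $\high(p)$ whose joint assignment falsifies both $C'$ and $D'$.

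Next I would fix the pivot. Assume $x'\neq x$ (the case $x'=x$ simply places all of $C'$ and $D'$ in the prefix and is a degenerate instance of the argument below). Because $x'\in\var(C)\cap\var(D)$, the $x'$-literals of $C$ and $D$ lie in $C'$ and $D'$ respectively; by Assumption \ref{as3} they are not complementary, hence they are the same literal $l'$. Thus $\beta^h$ and $\beta^l$ both falsify $l'$, i.e.\ they agree in assigning $x'$ the value that makes $l'$ false. By Assumption \ref{as4} every other variable of $C\cup D$ precedes $x'$ in $\tvo$, so whether a path falsifies $C$ or $D$ depends only on its restriction up to the $x'$-node, and any tail beyond $x'$ may be chosen freely provided it reaches $\fa$.

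The decisive step is the synchronisation, and this is where the isomorphism $\high(p)\isomBDD\low(p)$ is used: since the two sub-OBDDs are isomorphic, their false-paths carry identical literal sequences, so for any prefix $\alpha_1$ reaching $p$ and any false-path $\alpha_2$ of $\high(p)$ both $\alpha_1.x.\alpha_2$ and $\alpha_1.\lnot x.\alpha_2$ are false-paths of $\B$ simultaneously. It therefore suffices to build one false-path through $p$ whose assignment falsifies $C'$ and $D'$ at once. Assumption \ref{as3} makes the literals forced by $C'\vee D'$ mutually consistent, and Assumption \ref{as4} confines them to variables that do not follow $x'$, so the natural candidate branches, at each node whose variable occurs in $C'\cup D'$, so as to falsify the corresponding literal---agreeing with $\beta^h$ on the variables of $C'$, with $\beta^l$ on the variables of $D'$, and with both at $x'$ on the $\lnot l'$-branch---and then runs to $\fa$ along any tail beyond $x'$. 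I expect the main obstacle to be showing that this spliced sequence is a genuine path of $\B$: the forced literals straddle $p$, so the branch choices witnessed separately by $\beta^h$ and $\beta^l$ must be amalgamated into one consistent root-to-$\fa$ path. This is exactly the point at which the assumption that the clauses of $\varphi$ are not subsumed during the construction---which the present lemma formalises---must be invoked; once the combined path $\alpha_1.x.\alpha_2$ is in hand, reading off $\alpha_1.x.\alpha_2\not\models C$ and $\alpha_1.\lnot x.\alpha_2\not\models D$ is routine.
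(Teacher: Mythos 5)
Your preparatory observations are sound and in fact match ingredients that are implicit in the paper's argument: the witnesses $\beta^h,\beta^l$ exist by the definition of $\cls(\high(p),\varphi)$ and $\cls(\low(p),\varphi)$; assumption~(2) forces the two $x'$-literals of $C$ and $D$ to be the same literal $l'$; and since $\high(p)\isomBDD\low(p)$, any suffix $\alpha_2\in\pathsf(\high(p))$ is simultaneously a suffix of a false-path in $\low(p)$, so it suffices to produce one falsifying prefix--suffix pair through $p$. But the proposal breaks down at exactly the step you yourself label the main obstacle: you never prove that the spliced sequence of branch choices is realized by an actual path of $\B$. In an OBDD a consistent set of literals need not lie on any single root-to-$\fa$ path --- paths skip variables, and the branch structure determines which literal combinations are reachable and which of those reach $\fa$ --- and the separate witnesses $\beta^h$ and $\beta^l$ may disagree in their branch choices at variables outside $\var(C)\cup\var(D)$; nothing in your argument shows these disagreements can be reconciled while the prefix still reaches $p$ and the suffix still reaches $\fa$. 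Worse, the way you propose to close the gap is circular: the sentence preceding the statement says that the non-subsumption of clauses during construction by Algorithm~\ref{algo:and} \emph{is formalised with} Lemma~\ref{lemma:paths-pairs} itself, so there is no independent ``non-subsumption assumption'' available to invoke; appealing to it is assuming the conclusion.

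The paper closes this gap by a different mechanism: induction on $k=|\var(\pathsf(p))|$, peeling the OBDD from the root instead of splicing two pre-existing witnesses. If $p$ is not the root, the claim for $(\B,\varphi)$ is reduced to the claim for $\high(\rt(\B))$ with $\varphi|_{\lnot\var(\rt(\B))}$, or for $\low(\rt(\B))$ with $\varphi|_{\var(\rt(\B))}$, prepending the root literal to the common prefix. If $p$ is the root, the isomorphism $\high(p)\isomBDD\low(p)$ gives $\high(q^h)\isomBDD\high(q^l)$ and $\low(q^h)\isomBDD\low(q^l)$ for the children $q^h=\rt(\high(p))$, $q^l=\rt(\low(p))$, and one recurses into the OBDDs $\B^h,\B^l$ obtained by redirecting the branches of $p$ one level down, with $\varphi$ restricted by the child label. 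In this way the common $\alpha_1$ and $\alpha_2$ are assembled literal by literal, and realizability as a genuine path is guaranteed at every step by the induction hypothesis rather than asserted at the end. To repair your argument you would need to replace the splicing step by such an induction (or by a separately proved invariant of Algorithm~\ref{algo:and}), since the existence of two separate falsifying paths is strictly weaker than the existence of a synchronized pair sharing both $\alpha_1$ and $\alpha_2$.
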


\begin{proof} 
In fact, the lemma statement is implied directly  by  the assumption $\high(p)\isomBDD\low(p)$  and the definition of   Algorithm \ref{algo:and}. But we will  provide a somewhat formal proof by induction on $k=|\var(\pathsf(p))|$.

Let $k=1$. Now we obtain that  $\alpha_1=\alpha_2=\varepsilon$ where $\varepsilon$ is the empty string and the lemma holds.
 We assume that the lemma holds for $k$ with $k\geq 1$ and show that it hold for $k+1$. We consider the following cases:
 \begin{itemize}

 \item The node  $p$ is  not the root of $\B$. Then the lemma holds by the induction hypothesis and the definition of   Algorithm \ref{algo:and}  for  $\high(\rt(\B))$ and $\varphi|_{\lnot \var(\rt(\B))}$ or  for $\low(\rt(\B))$ and $\varphi|_{\var(\rt(\B))}$. Now the lemma holds straightforwardly for $\B$ and $\varphi$.
 
 \item The node $p$ is the root of $\B$.  Let  $q^h=\rt(\high(p))$ and $q^l=\rt(\low(p))$. It follows from $\high(p)\isomBDD \low(p)$ that $\high(q^h)\isomBDD \high(q^l)$ and $\low(q^h)\isomBDD \low(q^l)$. 
  
We construct the OBDD $\B^h$ by redirecting  there true and false branches of $p$ to the $\rt(\high(q^h))$ and  $\rt(\high(q^l))$ correspondingly; and $\B^l$ by redirecting there true and false branches of $p$ to the $\rt(\low(q^h))$ and  $\rt(\low(q^l))$ correspondingly. Let $y=\rt(q^h)=\rt(q^l)$. By the induction hypothesis the lemma holds for $\B^h$ and $\varphi|_{\lnot y}$ or for $\B^l$ and $\varphi|_{y}$. Hence, the lemma holds for arbitrary $\B$ and $\varphi$.
\end{itemize}
 
\end{proof}

 Lemma below  presents some technical results  which will be used to prove Theorem \ref{lemma:invariant}. 

\begin{lemma}\label{lemma:comb}

 Let $S$ be a finite set such that $|S|>0$, and $B_1, \dots, B_l\subseteq S$ be a sequence with:
\begin{itemize}
\item $B_l=S$

\item For each $B_i$, $1\leq i < l$,  one of the following holds:
\begin{itemize}

\item $B_i=\{s\}$ for $s \in S$

\item  $B_i =B_j\cup B_k$ for some $j,k$ with $j< k <i$

\end{itemize}
\end{itemize}
Then $l=2 |S|-1$.

\end{lemma}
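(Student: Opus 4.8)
The plan is to prove the identity $l = 2|S| - 1$ by induction on $n = |S|$, after recasting the sequence $B_1, \dots, B_l$ as a rooted binary tree: each set of the form $\{s\}$ is a leaf, each set of the form $B_j \cup B_k$ is an internal node whose two children are the operands $B_j$ and $B_k$, and the final set $B_l = S$ is the root. The quantity to be computed, $l$, is then exactly the total number of nodes of this tree, and the target $2n - 1$ is the familiar count ``number of leaves plus number of internal nodes'' for a full binary tree with $n$ leaves.

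For the base case $n = 1$ we have $S = \{s\}$ for a single element $s$, the only admissible set is the leaf $B_1 = S$, and $l = 1 = 2 \cdot 1 - 1$. For the inductive step with $n \geq 2$ I would first observe that the root $B_l = S$, being produced by the same construction rule and having at least two elements, cannot be a singleton, so it is a union $B_l = B_j \cup B_k$ with $j < k < l$; since these two operands together yield all of $S$ and are disjoint, they partition $S$ into proper nonempty parts of sizes $n_1 = |B_j|$ and $n_2 = |B_k|$ with $n_1 + n_2 = n$. The subsequence of the $B_i$ used to construct $B_j$ is itself a valid instance of the lemma for the set $B_j$, and likewise for $B_k$; applying the induction hypothesis to each gives lengths $2 n_1 - 1$ and $2 n_2 - 1$. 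Adding the root contributes one more set, so $l = (2 n_1 - 1) + (2 n_2 - 1) + 1 = 2(n_1 + n_2) - 1 = 2 n - 1$, as required.

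The step I expect to be the main obstacle is the bookkeeping that makes the additivity $l = (2 n_1 - 1) + (2 n_2 - 1) + 1$ legitimate: I must argue that the two subsequences constructing $B_j$ and $B_k$ are disjoint and that, together with the root, they exhaust all $l$ sets, i.e.\ that no set is shared between the two branches and none is left unused. This is precisely where the disjointness of the unions is essential: if some set $B_m$ fed into both branches, every ancestor of $B_m$ in each branch would contain the elements of $B_m$, and the two branches could then never be recombined by the disjoint final union forming $S$. In the setting where this lemma is applied the sets $B_i$ arise as disjoint pieces, so this tree structure is available; I would make the disjointness hypothesis explicit and use it to pin down how the sets split between the two subtrees.

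An alternative, purely counting route avoids the recursion: writing $s$ for the number of leaves (singletons) and $u$ for the number of internal nodes (unions) among $B_1, \dots, B_l$, one has $s + u = l$, the leaves account for the $n$ elements once each so $s = n$, and each internal node consumes two operands while every set except the root is consumed exactly once, giving $2u = l - 1$; eliminating $u$ yields $l = 2n - 1$. Either way the content of the argument is the same structural fact, and establishing that fact from the hypotheses is the real work.
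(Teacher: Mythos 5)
Your proof is correct (given the hypotheses you rightly insist on adding) but takes a genuinely different route from the paper's. The paper argues by induction on $|S|$ directly: the base case $|S|=1$ gives $l=1$, and the inductive step asserts in a single line that enlarging $S$ by one element increases $l$ by exactly $2$, so $l'=l+2=(2|S|-1)+2=2|S'|-1$; no tree structure is introduced and no justification of the increment is offered. You instead split at the root: since $B_l=S$ with $|S|\geq 2$ must be a union $B_j\cup B_k$ of disjoint parts of sizes $n_1+n_2=n$, the construction decomposes into two disjoint subsequences to which the induction hypothesis applies, giving $l=(2n_1-1)+(2n_2-1)+1=2n-1$; and you supply a second, recursion-free double-counting argument ($s=n$ singletons, $2u=l-1$ operand consumptions, $l=s+u$), which is arguably the cleanest version. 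Your approach buys precision exactly where the paper is hand-waving: it makes visible the unstated hypotheses the lemma needs --- that the unions are of pairwise disjoint sets, that each element of $S$ occurs as a singleton exactly once, and that every set except $B_l$ is used as an operand exactly once. These are genuinely necessary, since as literally stated the lemma is false: for $S=\{a,b\}$ the sequence $B_1=\{a\}$, $B_2=\{a\}$, $B_3=\{b\}$, $B_4=B_2\cup B_3=S$ satisfies every stated condition yet has $l=4\neq 3$. The paper's one-line induction silently relies on the same implicit single-use, disjoint-union structure (which does hold in the lemma's intended application to the sets $S_i$, $S'_i$, $S''_i$ in the proof of Lemma~\ref{lemma:invariant}), so your explicit flagging of that bookkeeping is a real improvement, and the paper's proof buys only brevity.
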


\begin{proof} We give a proof by induction on $|S|$. 
As   the basis step we choose $|S|=1$. Then the lemma hold as  trivially $l=1$.
Let  the lemma hold for any $|S|$. Assume a set $S'$  such that $|S'|=|S|+1$.  Then $l'=l+2=(2 |S|-1) +2=2|S'|-1$.

\end{proof}

\section{OBDDs and Resolution as Proof Systems}\label{sec:proof_systems}

Proof systems based on resolution and OBDDs are so-called refutational proof systems. 
A refutation of
an unsatisfiable CNF $\varphi$ starts with the clauses of $\varphi$ and
derives a contradiction represented by the
empty clause $\bot$ for resolution and by the $\fa$ node for OBDDs.

Any   proof system operating with OBDDs can be seen as an instance of the  constraints based proof
system. 
In the following we consider the OBDD proof system which uses two rules, \ax~and \jn.

\begin{definition}[OBDD refutation]\label{def:obdd-ref} An \emph{OBDD refutation}  of a CNF $\varphi$ is a sequence of OBDDs $\B_1, \dots, \B_k$  such that 
the following holds:
\begin{itemize}
\item $\ax :$ $\B_i \isomBDD C_i$ with  $1\leq i\leq |\cls(\varphi)|$;
\item $\jn :$ $\B_{i}\isomBDD\B^{\downarrow}_{j'} \wedge \B^{\downarrow}_{j''}$  with    $1 \leq j'<j''<i$ and $|\cls(\varphi)|< i \leq k$;
\item $\B^{\downarrow}_k\isomBDD \fa$.
\end{itemize}
The  size of the OBDD refutation  is defined as $\sum_{i=1}^k |\B_i|$.
\end{definition}

Without loss of generality we can assume that each OBDD is used exactly once, that is if a CNF $\varphi$ consists of $m$ clauses then the number of OBDDs in the OBDD refutation of $\varphi$ is exactly $2m-1$.

The resolution proof system goes back  to Robinson~\cite{R65} and  consists of a single   rule. 
It derives from two clauses $l\vee C$ and $\lnot l\vee D$,  such that $C$ and $D$ do not contain a complementary literal,  the  new clause $C\vee D$ called the resolvent  of  $l\vee C$ and $\lnot l\vee D$,
and denoted  in the following by  $\res(l\vee C, \lnot l\vee D)$. 

When we write $\res(C,D)$,  we assume that there is a literal  $l$ such that $l\in\lit(C)$ and $\lnot l\in\lit(D)$;  moreover, the clauses $C$ and $D$ contain  no other complimentary literals.

\begin{definition}[Resolution refutation]   A resolution refutation of a CNF $\varphi$ of size $k$  is a sequence of clauses $C_1, \dots, C_k$ such that 

\begin{enumerate}

\item $\ax :$ $C_i\in\cls(\varphi)$ with  $1\leq i\leq |\cls(\varphi)|$;
\item $\jn :$ $C_i=\res(C_{j'},C_{j''})$  with $1 \leq j'<j''<i$ and   $|\cls(\varphi)|< i \leq k$;
\item $C_k=\bot$.
\end{enumerate}

We say that $k$ is  the size of the resolution refutation. 
We write $\varphi \vdash_{\res} \psi$ for any CNF $\psi$ such that $\psi=\bigwedge_{i=|\cls(\varphi)|+1}^{k'} C_i$ with $k'\leq k$.
 \end{definition}

\section{Simulating OBDDs by Resolution}\label{sec:simulation}

In the rest of the paper  we  show formally  that if there is an OBDD refutation  of a CNF $\varphi$ of size $n$ then there is a resolution refutation of $\varphi$ of size at most $n^2$. 
The existence of such  resolution refutation  is based on the following observations:

\begin{enumerate}

\item elimination of a node can be simulated by at most $|\cls(\varphi)|$ resolution steps;
\item  $|\cls(\varphi)|\leq n$;
\item the number of nodes which can be removed by the elimination rule is at most   $n$.
\end{enumerate}

Our main argument is based on the  idea    that  the elimination rule can be simulated  by applying the resolution rule on the variable corresponding to the eliminated node  \cite{P2008}. 

We 
strengthen this idea and 
  prove that  the number of resolution steps corresponding to the elimination of a node  is  bounded by the number of clauses  in the input CNF $\varphi$.

Moreover, we show that it is  an  invariant property: although    resolution steps generate new clauses,  the number of resolution steps needed to simulate   elimination of a node in an  intermediate OBDD remains  bounded by the number of clauses  of  $\varphi$ encoded by this OBDD.
We also show that  it is sufficient to simulate only the elimination rule. 
As the merging rule plays no role in the context, we assume for simplicity that all intermediate OBDDs are merged. 

\begin{example} \normalfont  Before giving  technical details, we provide a simple illustrating example.  We consider the CNF  $$\varphi=(x\vee \lnot y)\wedge (y\vee z)\wedge (y\vee \lnot z) \wedge \lnot x$$ An OBDD refutation of $\varphi$ is depicted in Figure \ref{fig:obdd-ref}. The OBDDs $\B_6^{\downarrow}$ and  $\B_7^{\downarrow}$ are obtained from $\B_6$ and $\B_7$ correspondingly by applying the elimination rule.
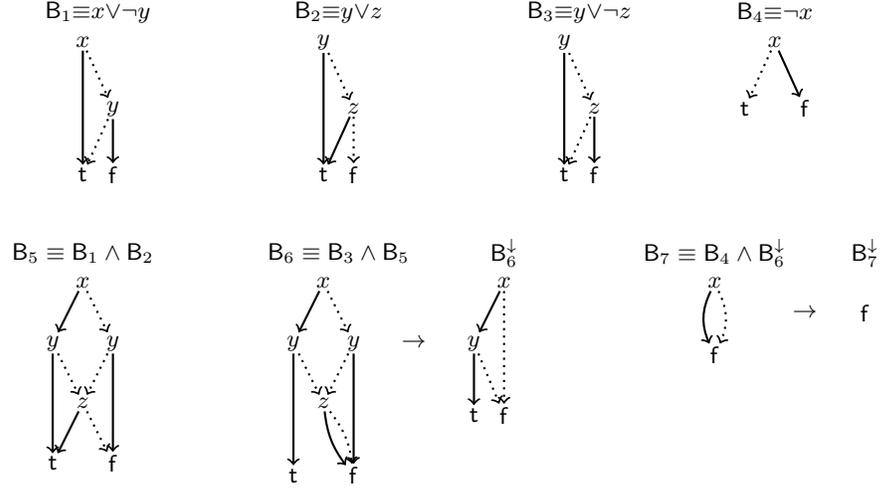
\begin{figure}[t]
\begin{center}
\begin{tikzpicture}
  [scale=0.8,
  inner sep = 1 pt]
  
  \begin{scope}[shift={(0,0)}]
    \node () at (0.25,0.5) {$\B_1{\equiv}x{\lor} \lnot y$};
    \node (x)  at (0,0) {$x$};
    \node (z)  at (0.5,-1.1)  {$y$};
    \node (1)  at (0,-2.2)  {$\T$};
    \node (0)  at (0.5,-2.2)  {$\F$};
   
    \path[thick,->]
    (x) edge             	(1) 
    (z) edge 			(0);
    \path[thick,->,dotted]
    (x) edge             	(z) 
    (z) edge 			(1);
\end{scope}


\begin{scope}[shift={(4,0)}]

  \node () at (0.25,0.5) {$\B_2{\equiv}y{\lor}z$};
    \node (y)  at (0,0) {$y$};
    \node (z)  at (0.5,-1.1)  {$z$};
    \node (1)  at (0,-2.2)  {$\T$};
    \node (0)  at (0.5,-2.2)  {$\F$};
    
    \path[thick,->]
    (y) edge             	(1) 
    (z) edge 			(1);
    \path[thick,->,dotted]
    (y) edge             	(z) 
    (z) edge 			(0);
\end{scope}


  \begin{scope}[shift={(8,0)}]
    \node () at (0.25,0.5) {$\B_3{\equiv}y{\lor}\neg z$};
    \node (y)  at (0,0) {$y$};
    \node (z)  at (0.5,-1.1)  {$z$};
    \node (1)  at (0,-2.2)  {$\T$};
    \node (0)  at (0.5,-2.2)  {$\F$};
    
    \path[thick,->]
    (y) edge             	(1) 
    (z) edge 			(0);
    \path[thick,->,dotted]
    (y) edge             	(z) 
    (z) edge 			(1);
  \end{scope}


  \begin{scope}[shift={(11.5,0)}]
    \node () at (0,0.5) {$\B_{4}{\equiv}\neg x$};
    \node (y)  at (0,0) {$x  $};
    \node (1)  at (-0.5,-1.1)  {$\T$};
    \node (0)  at (0.5,-1.1)  {$\F$};
    
    \path[thick,->]
    (y) edge             	(0); 
    \path[thick,->,dotted]
    (y) edge             	(1);
  \end{scope}


\begin{scope}[shift={(0,-4.0)}]
  \node () at (0,0.5) {$\B_5\equiv \B_1\wedge \B_2$};
  \node (x)  at (0,0) {$x$};
  \node (y1) at (-0.5,-1) {$y$};
  \node (y2) at (0.5,-1) {$y$};
  \node (z)  at (0,-2) {$z$};
  \node (1)  at (-0.5,-3)  {$\T$};
  \node (0)  at (0.5,-3)  {$\F$};
  
  \path[thick,->]
  (x) edge             	(y1) 
  (y1) edge             	(1) 
  (y2) edge     	(0) 
  (z) edge 			(1);
  \path[thick,->,dotted]
  (x) edge             	(y2) 
  (y1) edge             	(z) 
  (y2) edge      	(z) 
  (z) edge 			(0);

\end{scope}


\begin{scope}[shift={(4,-4.0)}]
  \node () at (0.25,0.5) {$\B_6 \equiv \B_3\wedge \B_5$};
    \node (x)  at (0,0) {$x$};
    \node (y1) at (-0.5,-1) {$y$};
    \node (y2) at (0.5,-1) {$y$};
    \node (z1) at (0,-2) {$z$};
    \node (1) at (-0.5,-3.2)  {$\T$};
    \node (0) at (0.5,-3.2)  {$\F$};
    
    \path[thick,->]
    (x) edge             	(y1) 
    (y1) edge             	(1) 
    (y2) edge            	(0) 
    (z1) edge[bend right=15]	(0);

    \path[thick,->,dotted]
    (x) edge             	(y2) 
    (y1) edge             	(z1) 
    (y2) edge      	        (z1) 
    (z1) edge[bend left=15]    	(0) ;

\node () at (1.5,-1) {$\rightarrow$};

\end{scope}


\begin{scope}[shift={(7,-4.0)}]
  \node () at (0,0.5) {$\B^{\downarrow}_6$};
  \node (x)  at (0,0) {$x$};
  \node (y1) at (-0.5,-1) {$y$};
  \node (1)  at (-0.5,-2.2)  {$\T$};
  \node (0)  at (0,-2.2)  {$\F$};
  
  \path[thick,->]
  (x)  edge             	(y1) 
  (y1) edge             	(1) ;

  \path[thick,->,dotted]
  (x)  edge             	(0) 
 (y1) edge             	(0) ;

\end{scope}


  \begin{scope}[shift={(10.5,-4.0)}]
    \node () at (0,0.5) {$\B_{7}\equiv \B_4\wedge\B^{\downarrow}_{6}$};
    \node (x)  at (0,0)      {$x$};
    \node (0) at (0,-1.2)  {$\F$};
    
    \path[thick,->]
    (x) edge[bend right=25]   	(0); 
    \path[thick,->,dotted]
    (x) edge[bend left=25]   	(0);

\node () at (1.5,-0.5) {$\rightarrow$};

  \end{scope}


  \begin{scope}[shift={(13,-4.0)}]
    \node () at (0,0.5) {$\B^{\downarrow}_{7}$};
    \node (0)  at (0,-0.5)  {$\F$};
  \end{scope}

\end{tikzpicture}

\end{center}
\caption{The OBDD refutation of $\varphi=(x\vee \lnot y)\wedge (y\vee z)\wedge (y\vee \lnot z) \wedge \lnot x$  where the solid lines represent the $\tr$-branches and the dotted lines represent the $\fa$-branches} \label{fig:obdd-ref}
\end{figure}
Consider the following resolution refutation of $\varphi$:
$$C_1=x\vee \lnot y,  C_2=y\vee z, C_3=y\vee \lnot z, C_4= \lnot x,$$
$$ C_5=\res(C_2,C_3)=y, C_6=\res(C_1,C_5)=x, C_7=\res(C_4,C_6)=\bot$$
It  simulates the OBDD refutation as follows:
\begin{itemize}
\item By resolving the clauses $C_2$ and $C_3$ we eliminate the occurrences of $z$. By   resolving the clauses $C_1$ and $C_5$ we eliminate the occurrences of $y$.  The new set of clauses decodes the OBDD $\B_6^{\downarrow}$.

\item  By resolving the clauses $C_4$ and $C_6$ we eliminate the occurrences of $x$ in $\B_7$.
\end{itemize}

\end{example}

\subsection{Simulation of the Elimination Rule}

In this section we show that elimination of a node can be simulated by at most $|\cls(\varphi)|$ resolution steps, where $\varphi$ is the input unsatisfiable CNF.

Lemma  \ref{lemma:res_upperbound} demonstrates 
 that  elimination of a node $p$ can be simulated by  at most $k/2$ resolution steps, where $k$ is the number of paths going through $p$ to the $\fa$-node.  This is a variant  of Lemma 2 from  \cite{P2008} which   serves our needs better.

\begin{lemma}\label{lemma:res_upperbound}
Assume a CNF $\varphi$ and an OBDD $\B$ such that $\varphi\less \B$.  Let
$\B\rightarrow_{p}\B'$  for a node  $p\in \node(\B)$ and an OBDD  $\B'$.  Then  there is a CNF  $\psi \in \respairs(p,\varphi)$
such that 
 $$|\cls(\psi)|~\leq ~|\paths^f(p)|/2 $$
 \end{lemma}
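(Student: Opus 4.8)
The plan is to exploit the isomorphism $\high(p)\isomBDD\low(p)$ to pair up the $\fa$-paths through $p$, and then to attach to each merged path of $\B'$ a single falsifying clause obtained by at most one resolution step on $x=\var(p)$; counting one new clause per pair yields the bound $|\pathsf(p)|/2$.

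First I would analyse the effect of eliminating $p$ on the $\fa$-paths. Every $\alpha\in\pathsf(p)$ passes through either the $\tr$- or the $\fa$-successor of $p$, so it has the form $\alpha_1.x.\alpha_2$ or $\alpha_1.\lnot x.\alpha_2$; since $\high(p)\isomBDD\low(p)$, the map $\alpha_1.x.\alpha_2\mapsto\alpha_1.\lnot x.\alpha_2$ is a bijection between the high-paths and the low-paths of $\pathsf(p)$, so $|\pathsf(p)|$ is even and these paths split into $|\pathsf(p)|/2$ pairs. Eliminating $p$ redirects the incoming edges to $\rt(\high(p))$, collapsing each pair $(\alpha_1.x.\alpha_2,\ \alpha_1.\lnot x.\alpha_2)$ into the single merged path $\alpha_1.\alpha_2\in\pathsf(\B')$, while every path that avoids $p$ is left unchanged.

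Next, for each pair let $C=\af(\alpha_1.x.\alpha_2)$ and $D=\af(\alpha_1.\lnot x.\alpha_2)$. Since $\alpha_1.x.\alpha_2\not\models C$, all literals of $C$ are complements of literals on that path, so $C$ may contain $\lnot x$ but never $x$; symmetrically $D$ may contain $x$ but never $\lnot x$. I would then distinguish cases. If $x\notin\var(C)$ (or $x\notin\var(D)$), the literals of $C$ (resp.\ $D$) already lie among the complements of $\alpha_1.\alpha_2$, so $C$ (resp.\ $D$) falsifies the merged path and can be reused without introducing a new clause. Otherwise $C=\lnot x\vee C'$ and $D=x\vee D'$, and I would form the resolvent $\res(C,D)=C'\vee D'$. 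The crucial point is that this resolution is legitimate: if some $l\in\lit(C')$ had $\lnot l\in\lit(D')$ then both $l$ and $\lnot l$ would be complements of literals of the consistent path $\alpha_1.\alpha_2$, which is impossible. This is exactly the non-complementarity recorded in Lemma~\ref{lemma:paths-pairs}(\ref{as3}) together with the standing non-subsumption assumption, and since $\lit(C')\cup\lit(D')$ is contained in the complements of $\alpha_1.\alpha_2$, the resolvent falsifies the merged path. Collecting the at most one new resolvent per pair into a CNF $\psi$, I would verify $\psi\in\respairs(p,\varphi)$: each clause of $\psi$ is a one-step resolvent of clauses of $\varphi$, so $\varphi\vdash_{\res}\psi$; defining a witness function for $\B'$ that sends each merged path to its reused clause or new resolvent and each unchanged path to its old $\af$-value shows that every $\fa$-path of $\B'$ is falsified by a clause of $\varphi\land\psi$. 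Logical equivalence $\varphi\land\psi\equiv\B'$ follows because elimination preserves the semantics of the OBDD (Lemma~\ref{prop:equiv}) and the clauses of $\psi$ are resolution consequences of $\varphi$; hence $\varphi\land\psi\less\B'$. As each of the $|\pathsf(p)|/2$ pairs contributes at most one clause, $|\cls(\psi)|\leq|\pathsf(p)|/2$.

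I expect the main obstacle to be the case analysis guaranteeing that a single falsifying clause always exists for each merged path — in particular, verifying that when both $C$ and $D$ mention $x$ their resolvent is well defined (non-tautological) and still falsified by $\alpha_1.\alpha_2$. This is precisely where the structural assumptions of Lemma~\ref{lemma:paths-pairs} and the fact that a path is a consistent set of literals are needed.
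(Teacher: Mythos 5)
Your proposal is correct and follows essentially the same route as the paper's proof: pair each $\fa$-path $\alpha.x.\beta$ through $\high(p)$ with its isomorphic partner $\alpha.\lnot x.\beta$ through $\low(p)$, reuse an existing clause of $\varphi$ when one already falsifies the merged path $\alpha.\beta$, and otherwise add the single resolvent $\res(\af(\alpha.x.\beta),\af(\alpha.\lnot x.\beta))$, giving at most one clause of $\psi$ per pair and hence the bound $|\pathsf(p)|/2$. The only difference is one of detail: the paper states this construction in one line and asserts $\psi\in\respairs(p,\varphi)$ ``by construction,'' whereas you explicitly verify the points it leaves implicit (non-tautology of the resolvent via consistency of the path, falsification of the merged path, and $\varphi\land\psi\less\B'$).
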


 \begin{proof}
Assume that  $x=\var(p)$.
Let  $\psi$ be the smallest CNF satisfying the following:  
for all  $\alpha$ and $\beta$ such that $\alpha.x.\beta, \alpha.\lnot x.\beta \in \pathsf(p)$, 
$$\lnot (\exists C\in\cls(\varphi):  \alpha.\beta\not\models C) \; \implies \; \res(\af(\alpha.x.\beta),\af(\alpha.\lnot x.\beta))      \in\cls(\psi)$$
 By construction,   
 $\psi \in \respairs(p,\varphi)$ and 
  $|\cls(\psi)|\leq  |\paths^f(p)|/2$.

\end{proof}

\begin{example}  \normalfont
We provide another illustrating  example. Consider the CNF   $\varphi$ consisting of the following  eight clauses:
{
 \begin{align*}
 &C_1=\lnot x\vee \lnot y\vee \lnot v  &C_2=& \lnot x\vee \lnot z \vee  \lnot w  &C_3=&\lnot x\vee y\vee \lnot v  &C_4=& \lnot x\vee z \vee \lnot  w \\
 &D_1= ~  x\vee \lnot z \vee \lnot v     &D_2= & ~  x\vee \lnot y\vee \lnot  w  &D_3= & ~  x\vee z \vee \lnot v  &D_4= & ~ x\vee y \vee \lnot  w 
\end{align*}
}

Figure \ref{fig:obdd1} represents  the OBDD encoding of  $\varphi$,  and the mapping of the $\fa$-paths onto the  clauses of $\varphi$. 
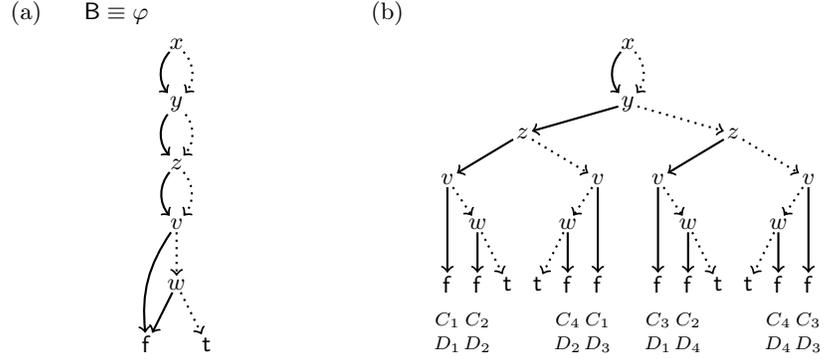
\begin{figure}[t]
 \begin{center}
\begin{tikzpicture}
  [scale=0.8,
  inner sep = 1 pt]

\begin{scope}[shift={(-3.5,0)}]

\node (a)  at (-2.5,0.5) {{(a)}}; 
\node (a)  at (-1,0.5) {{$\B\equiv \varphi$}};

\node (x0)  at (0,0)  {$x$};
\node (x1)  at (0,-1)  {$y$};
\node (x2)  at (0,-2)  {$z$};   
\node (x3) at (0,-3)  {$v$}; 
\node (x4) at (0,-4)  {$w$};

\node (1)  at (0.5,-5)  {$\T$};
\node (0)  at (-0.5,-5)  {$\F$};

\path[thick,->]
(x0)edge[bend right=40]    	(x1)
(x1)edge[bend right=40]    	(x2)
(x2)edge[bend right=40]    	(x3)
(x3)edge[bend right=20]    	(0)

(x4) edge    (0);
      
\path[thick,->,dotted] 
(x0)edge[bend left=40]    	(x1)
(x1)edge[bend left=40]    	(x2)
(x2)edge[bend left=40]    	(x3)

(x3) edge   (x4)   
(x4) edge   (1) ;   
\end{scope}


\begin{scope}[shift={(4,0)}]

\node (a)  at (-4,0.5) {{(b)}}; 

\node (x)  at (0,0)  {$x$};


\node (x11)  at (0,-1)  {$y$};
  
\node (x21)  at (-1.75,-1.5)  {$z$};
\node (x22)  at (1.75,-1.5)  {$z$};   
     
\node (x31) at (-3,-2.25)  {$v$};
\node (x32) at (-0.5,-2.25)  {$v$};
\node (x33) at (0.5,-2.25)  {$v$};
\node (x34) at (3,-2.25)  {$v$};
 
\node (x41) at (-2.5,-3)  {$w$};
\node (x42) at (-1,-3)  {$w$};
\node (x43) at (1,-3)  {$w$};
\node (x44) at (2.5,-3)  {$w$};

\node (11)  at (-2,-4)  {$\T$};
\node (12)  at (-1.5,-4)  {$\T$};   
\node (13)  at (1.5,-4)  {$\T$};
\node (14)  at (2,-4)  {$\T$};

\node (01)  at (-3,-4)  {$\F$};
\node (02)  at (-2.5,-4)  {$\F$};    
\node (03)  at (-1,-4)  {$\F$};
\node (04)  at (-0.5,-4)  {$\F$};  
  
\node (05)  at (0.5,-4)  {$\F$};   
\node (06)  at (1,-4)  {$\F$};
\node (07)  at (2.5,-4)  {$\F$};    
\node (08)  at (3,-4)  {$\F$};
 
\path[thick,->]
(x) edge[bend right=40]    	(x11)

(x11) edge    (x21) 
(x21) edge    (x31)
(x22) edge    (x33) 
(x31) edge   (01)
(x41) edge    (02)
(x32) edge     (04)      
(x42) edge    (03)
(x33) edge     (05)
(x43) edge    (06)
(x34) edge    (08)
(x44) edge    (07);
         
\path[thick,->,dotted]    
(x) edge[bend left=40]    	(x11)

(x11) edge     (x22)
(x21) edge     (x32)
(x22) edge     (x34)
(x31) edge     (x41)
(x32) edge     (x42)
(x33) edge     (x43)
(x34) edge     (x44)
(x41) edge      (11)  
(x42) edge     (12)  
(x43) edge     (13)  
(x44) edge     (14) ;  

           
\node ()  at (-3,-4.6) {\scriptsize{$C_1$}};   
\node ()  at (-3,-5) {\scriptsize{$D_1$}};  
\node ()  at (-2.5,-4.6) {\scriptsize{$C_2$}};   
\node ()  at (-2.5,-5) {\scriptsize{$D_2$}}; 
\node ()  at (-1,-4.6){\scriptsize{$C_4$}}; 
\node ()  at (-1,-5) {\scriptsize{$D_2$}};      

\node ()  at (-0.5,-4.6) {\scriptsize{$C_1$}};  
\node ()  at (-0.5,-5) {\scriptsize{$D_3$}};

\node ()  at (0.5,-4.6) {\scriptsize{$C_3$}}; 
\node ()  at (0.5,-5) {\scriptsize{$D_1$}};

\node ()  at (1,-4.6) {\scriptsize{$C_2$}};   
\node ()  at (1,-5) {\scriptsize{$D_4$}};   

\node ()  at (2.5,-4.6) {\scriptsize{$C_4$}};  
\node ()  at (2.5,-5) {\scriptsize{$D_4$}}; 

\node ()  at (3,-4.6) {\scriptsize{$C_3$}};   
\node ()  at (3,-5) {\scriptsize{$D_3$}};  

\end{scope}

\end{tikzpicture}

 \caption{An example illustrating Theorem \ref{lemma:res_upperbound}:  (a) the OBDD encoding of  $\varphi$;  and (b) the mapping of the $\fa$-paths of the OBDD onto the set of clauses of $\varphi$}
 \label{fig:obdd1}
\end{center}
\end{figure}
Elimination of  the node labelled with $x$ can be simulated by the following resolution steps:
{ 
 \begin{align*} 
&\res(C_1,D_1)= \lnot y \vee \lnot z \vee \lnot v  &\res(C_2,D_2)=& \lnot y \vee z \vee \lnot v   &\res(C_1,D_3)=&  \lnot y \vee \lnot z \vee  \lnot w \\
  &\res(C_4,D_2)=\lnot y\vee z \vee \lnot  w     & \res(C_3,D_1)=& y \vee \lnot z \vee \lnot v        &\res(C_2,D_4)=& \vee z \vee \lnot v \\
    &\res(C_3,D_3)=y \vee\lnot z \vee  \lnot w 
   &\res(C_4,D_4)=&y\vee z \vee \lnot  w
\end{align*}
}
\end{example}

Corollary \ref{lemma:nodes} below follows directly from Theorem \ref{lemma:res_upperbound}.

\begin{corollary} \label{lemma:nodes} Assume an OBDD $\B$ and a CNF $\varphi$ such that $\varphi\less \B$. Suppose $\high(p) \isomBDD \low(p)$ for some $p\in\node(\B)$.   Let  $\ph\subseteq \pathsf(\high(p))$ and $\pl\subseteq \pathsf(\low(p))$ be sets such that $\alpha.x.\beta\in\ph$ if and only if $\alpha.\lnot x.\beta\in \pl$ for some $\alpha$ and $\beta$. 
Suppose  $\phc=\pathsf(\high(p))\backslash \ph$ and  $\plc=\pathsf(\low(p))\backslash \pl$. Suppose for each pair  of paths $(\alpha.x.\beta,\alpha.\lnot x.\beta)\in \phc \times \plc$ there is a pair of paths $(\alpha'.x.\beta',\alpha'.\lnot x.\beta')\in \ph \times \pl$ and clauses $C,D\in\cls(\varphi)$ such that 
\[\alpha.x.\beta,\alpha'.x.\beta'\not\models C\]
\[\alpha.\lnot x.\beta,\alpha'.\lnot x.\beta'\not\models D\]
Let $k=|\phc|$. Then  there is $\psi\in \respairs(p,\varphi)$ such that   $$|\cls(\psi)|~\leq ~  |\pathsf(p)|/2 - k$$
\end{corollary}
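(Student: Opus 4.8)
The plan is to re-run the construction of $\psi$ from the proof of Lemma~\ref{lemma:res_upperbound}, but to count the resulting resolvents more carefully, exploiting the matching supplied by the hypothesis. Recall that the $\fa$-paths through $p$ come in matched pairs $(\alpha.x.\beta,\alpha.\lnot x.\beta)$, one branch through $\high(p)$ and one through $\low(p)$; since $\high(p)\isomBDD\low(p)$ these are in bijection, so there are exactly $|\pathsf(p)|/2$ such pairs. The sets $\ph\times\pl$ and $\phc\times\plc$ partition these pairs, and from the matching $\alpha.x.\beta\in\ph \Leftrightarrow \alpha.\lnot x.\beta\in\pl$ together with $k=|\phc|$ we get $|\ph|=|\pl|=|\pathsf(p)|/2-k$.

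The key step is to choose the clause-assignment $\af$ so that it is guided by the hypothesis: for every pair $(\alpha.x.\beta,\alpha.\lnot x.\beta)\in\phc\times\plc$, take the witnessing pair $(\alpha'.x.\beta',\alpha'.\lnot x.\beta')\in\ph\times\pl$ and clauses $C,D$, and set $\af(\alpha.x.\beta)=\af(\alpha'.x.\beta')=C$ and $\af(\alpha.\lnot x.\beta)=\af(\alpha'.\lnot x.\beta')=D$. This is legitimate since $\alpha.x.\beta,\alpha'.x.\beta'\not\models C$ and $\alpha.\lnot x.\beta,\alpha'.\lnot x.\beta'\not\models D$. Then the resolvent produced by the $\phc\times\plc$ pair, namely $\res(\af(\alpha.x.\beta),\af(\alpha.\lnot x.\beta))=\res(C,D)$, is literally the same clause as the one produced by the matched $\ph\times\pl$ pair. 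Hence, as $\psi$ is a set of clauses, the pairs in $\phc\times\plc$ contribute no clause not already contributed by a pair in $\ph\times\pl$, so the family of resolvents defining $\psi$ is indexed (up to repetition) by $\ph\times\pl$ alone, giving $|\cls(\psi)|\le|\ph|=|\pathsf(p)|/2-k$.

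It remains to confirm $\psi\in\respairs(p,\varphi)$. Derivability $\varphi\vdash_{\res}\psi$ is immediate, since every clause of $\psi$ is a single resolvent $\res(C,D)$ with $C,D\in\cls(\varphi)$. For $\varphi\land\psi\less\B'$ I would argue exactly as in Lemma~\ref{lemma:res_upperbound}: each $\fa$-path $\alpha.\beta$ of $\B'$ arising from a matched pair either already falsifies a clause of $\varphi$, or is covered by the corresponding resolvent. Crucially, for a pair in $\phc\times\plc$ the covering resolvent $\res(C,D)$ is present in $\psi$ because it coincides with the one from the witnessing $\ph\times\pl$ pair, and writing $C=\lnot x\vee C'$ and $D=x\vee D'$ (as in Lemma~\ref{lemma:paths-pairs}) one checks $\alpha.\beta\not\models C'\vee D'=\res(C,D)$.

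The main obstacle I expect is precisely this last coverage check: I must make sure that dropping the "own" resolvents of the $\phc\times\plc$ pairs still leaves every $\fa$-path of $\B'$ falsified by some clause of $\varphi\land\psi$. This is where the full strength of the hypothesis — that the same $C$ and $D$ are falsified by both members of the $\phc\times\plc$ pair and by both members of the witnessing $\ph\times\pl$ pair — is needed, and where the non-subsumption assumption underlying Lemma~\ref{lemma:paths-pairs} guarantees that the witnessing resolvent is genuinely generated rather than discarded.
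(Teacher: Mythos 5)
Your proposal follows precisely the route the paper intends: the paper gives this corollary no proof of its own, asserting only that it ``follows directly'' from Lemma~\ref{lemma:res_upperbound}, and your argument --- re-running that lemma's construction of $\psi$ with the clause-assignment $\af$ chosen so that each matched pair in $\phc\times\plc$ reproduces the resolvent $\res(C,D)$ of its witnessing pair in $\ph\times\pl$, whence duplicates collapse in the clause set and $|\cls(\psi)|\leq|\pathsf(p)|/2-k$ --- is exactly the intended instantiation, worked out in more detail than the paper itself provides. One caveat, inherited from the paper's statement rather than introduced by you: your redefinition of $\af$ is well-defined only if distinct pairs in $\phc\times\plc$ either have distinct witnesses or are witnessed with the same clauses $C,D$, which the existential hypothesis as written does not force, so strictly one should fix a single clause pair per witness (or assume the witnessing map injective) before the counting goes through.
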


\begin{example} \label{ex:php} \label{ex:gphp} \normalfont
The formulas $\php_n$, $n\geq 1$,  encoding the pigeonhole principle   were studied intensively in relation to complexity of different propositional proof systems and  they are  defined as follows.
\begin{align*}
 \php_n = \bigwedge_{i=1}^{n+1}  \bigvee_{j=1}^n p_{ij} \wedge  \bigwedge_{\mathclap{\substack{1\leq i <j\leq n+1\\ 1\leq k\leq n}}} \lnot p_{ik}\vee \lnot p_{jk}
\end{align*}    
    
We build  the formulas $\gphp_n$ by doubling the number of clauses of  $\php_n$: for some new variable $p_0$ $$p_0\vee C,  \lnot p_0\vee C\in \cls(\gphp_n)$$ if and only if $C\in \cls(\php_n)$.    

\begin{figure}[t]
\begin{center}
\begin{tikzpicture}
  [scale=0.85,
  inner sep = 1 pt]

\begin{scope}[shift={(5,0)}]

\node (p0)  at (0,7) {$p_0$}; 
 
\node (p11_1)  at (0,6) {$p_{11}$};   

\node (p12_1)  at (-1.25,5.5) {$p_{12}$};    
\node (p12_2)  at (1.25,5.5) {$p_{12}$};  

\node (p21_1)  at (-2.5,5) {$p_{21}$};  
\node (p21_2)  at (0,5) {$p_{21}$};  
\node (p21_3)  at (2.5,5) {$p_{21}$};   

\node (p22_1)  at (-2.5,4.25) {$p_{22}$};  
\node (p22_2)  at (-1.25,4.5) {$p_{22}$};  
\node (p22_3)  at (1.25,4.5) {$p_{22}$}; 
\node (p22_4)  at (2.5,4.25) {$p_{22}$}; 
  
\node (p31_1)  at (0,3.5) {$p_{31}$};      

\node (p32_1)  at (0,2.75) {$p_{32}$};

\node (f4)  at (0,1.5) {$\Fa$}; 

\path[->,thick]
(p0) edge[bend right=20]    	(p11_1)

(p11_1) edge           (p12_1) 
(p12_1) edge           (p21_1)
(p12_2) edge           (p21_3)

(p21_1) edge            (f4)
(p21_2) edge[bend right=35]              (f4)
(p21_3) edge           (p22_3)

(p22_1) edge  (f4)
(p22_2) edge           (p31_1)
(p22_3) edge  (f4)
(p22_4) edge[bend right=5] (f4)

(p31_1) edge[bend right=33]    	(f4)
(p32_1) edge[bend right=15]    	(f4)
 ;  
 
\path[->,dotted, thick]
(p0) edge[bend left=20]    	(p11_1)
(p11_1) edge           (p12_2) 

(p12_1) edge           (p21_2)
(p12_2) edge[bend right=3]           (f4)

(p21_1) edge           (p22_1) 
(p21_2) edge           (p22_2)
(p21_3) edge           (p22_4)

(p22_1) edge           (p31_1) 
(p22_3) edge           (p31_1)
(p22_4) edge[bend left=5] (f4)

(p31_1) edge           (p32_1)

(p32_1) edge[bend left=15]    	(f4) 
  ;
  

\end{scope}

\end{tikzpicture}

\end{center}
\caption{The OBDD encoding of  $\gphp_2$ with $\high(p_0)\isomBDD \low(p_0)$} \label{fig:PHP}
\end{figure}

Let $\B$ be the OBDD encoding $\gphp_2$  as it is depicted in Figure \ref{fig:PHP}. 
Elimination of the node $p_0$ can be  trivially simulated by $|\cls(\php_2)|$ resolution steps. It is sufficient to add  the resolvent $\res(p_0\vee C, \lnot p_0\vee C)=C$.

While the size of the OBDDs encoding $\gphp_n$ will grow exponentially in $n$, elimination of the node labelled with $p_0$ can be simulated in the same manner by the number of resolution steps bounded by $|\cls(\gphp_n)|$ which grows polynomially in $n$.

\end{example}

The subsequent statements improve the upper bound on the number of resolution steps needed to simulate elimination of an arbitrary node.  Namely, we show  that the number of resolution steps sufficient to simulate  elimination of a node $p$  is  bounded by $|\cls(\varphi)|$.

\begin{lemma}\label{lemma:invariant}  Let $\varphi$ be a CNF and $\B$ be an OBDD such that $\varphi\less \B$. Suppose 
$\B\rightarrow_{p}\B'$  for  some $p\in \node(\B)$ and   $\B'$.   Then  there is a CNF $\psi\in \respairs(p,\varphi)$ such that   
 $$|\cls(\psi)|~\leq ~ |\cls(\varphi)|$$
\end{lemma}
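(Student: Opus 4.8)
The plan is to reduce the statement to a purely combinatorial bound on the number of \emph{distinct} resolvents that genuinely have to be produced, and then to read off the required $\psi$ via Corollary \ref{lemma:nodes}. Throughout I set $x=\var(p)$ and pair up the false paths through $p$ as $(\alpha.x.\beta,\alpha.\lnot x.\beta)$, the two members being matched through the isomorphism $\high(p)\isomBDD\low(p)$; there are exactly $|\pathsf(p)|/2$ such pairs. Starting from Lemma \ref{lemma:res_upperbound}, the resolvent $\res(\af(\alpha.x.\beta),\af(\alpha.\lnot x.\beta))$ must be put into $\psi$ only for a pair whose contracted path $\alpha.\beta$ is not already falsified by $\varphi$; I call such a pair \emph{needed}, and the clauses of the minimal $\psi$ are precisely the distinct needed resolvents.

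First I would pin down which clauses can occur in a needed pair. If $C=\af(\alpha.x.\beta)$ did not contain the literal $\lnot x$, then, since $x$ is true along $\alpha.x.\beta$, the clause $C$ would not mention $x$ at all, whence $\alpha.\beta\not\models C$; the pair would then already be covered by $\varphi$ and so not needed. Symmetrically, the low clause $D=\af(\alpha.\lnot x.\beta)$ of a needed pair must contain $x$. Writing $H^{*}$ and $L^{*}$ for the high and low clauses that appear in needed pairs, every clause of $H^{*}$ contains $\lnot x$ and every clause of $L^{*}$ contains $x$, so $H^{*}$ and $L^{*}$ are disjoint subsets of $\cls(\varphi)$ and hence $|H^{*}|+|L^{*}|\leq|\cls(\varphi)|$.

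The core step, and the part I expect to be the main obstacle, is to bound the number of distinct needed resolvents by $|H^{*}|+|L^{*}|$. I would regard the needed pairs as the edges of a bipartite graph on $H^{*}\sqcup L^{*}$, joining $C$ to $D$ whenever some needed pair realises the clause pair $(C,D)$; two pairs realising the same $(C,D)$ yield the same resolvent, so the number of distinct needed resolvents is exactly the number of edges. The claim to be proved is that this graph has at most as many edges as vertices, for then a cyclic (in-degree at most one) orientation assigns to each resolvent a private clause of $\cls(\varphi)$ and $|H^{*}|+|L^{*}|\leq|\cls(\varphi)|$ finishes the count. The lever here is Lemma \ref{lemma:paths-pairs}: its hypothesis forces the two clauses of any realisable pair to share their \emph{largest} variable, and this shared-maximum constraint, together with the fixed order $\tvo$ and the no-subsumption convention, makes the family of pairable clauses laminar in the variable order. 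The delicate work is to turn this laminarity into the edge-count bound, i.e.\ to show that no connected component can carry two independent cycles; I would do this by induction along $\tvo$, peeling off the top variable and arguing that an excess edge would exhibit a pair already falsified by $\varphi$, contradicting neededness.

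Finally I would package the estimate into an actual refutation fragment. Choosing $\ph\subseteq\pathsf(\high(p))$ and the matching $\pl\subseteq\pathsf(\low(p))$ so as to keep one representative pair for each distinct needed resolvent, every discarded pair in $\phc\times\plc$ realises a clause pair already realised inside $\ph\times\pl$, so the hypothesis of Corollary \ref{lemma:nodes} holds. That corollary then produces a $\psi\in\respairs(p,\varphi)$ with $|\cls(\psi)|\leq|\pathsf(p)|/2-|\phc|=|\ph|$, and since $|\ph|$ equals the number of distinct needed resolvents, the combinatorial bound gives $|\cls(\psi)|\leq|H^{*}|+|L^{*}|\leq|\cls(\varphi)|$, which is the assertion of the lemma.
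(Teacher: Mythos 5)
Your opening reductions are sound and match the paper's framing: the restriction to ``needed'' pairs is exactly the minimal $\psi$ of Lemma \ref{lemma:res_upperbound}, the observation that $\lnot x\in C$ for every high-side clause and $x\in D$ for every low-side clause (hence $H^{*}\cap L^{*}=\emptyset$ and $|H^{*}|+|L^{*}|\leq|\cls(\varphi)|$) is correct, and the final packaging through Corollary \ref{lemma:nodes} is the same move the paper makes. But the core of the lemma --- that the number of \emph{distinct} needed clause-pairs $(C,D)$ is at most $|H^{*}|+|L^{*}|$ --- is precisely what you do not prove. Your proposal explicitly defers it (``the delicate work is to turn this laminarity into the edge-count bound \dots I would do this by induction along $\tvo$''), and a bipartite graph on $H^{*}\sqcup L^{*}$ can a priori have $|H^{*}|\cdot|L^{*}|$ edges, so without this step the bound collapses to the trivial quadratic one. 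Since this edge-count claim is essentially the entire content of Lemma \ref{lemma:invariant}, sketching it as future work leaves the proof with a genuine gap rather than a routine omission.

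Moreover, the lever you propose for closing that gap rests on a misreading of Lemma \ref{lemma:paths-pairs}. That lemma runs in the opposite direction: conditions (1)--(3), including the shared-largest-variable condition (3), are \emph{hypotheses} under which a clause pair is guaranteed to be realised by matched paths $\alpha.x.\beta$ and $\alpha.\lnot x.\beta$; it is a sufficiency statement, not a characterisation of realisable pairs. A pair can be realised without sharing its largest variable: take $C=\lnot x\vee\lnot y$ and $D=x\vee\lnot z$ with $y\tvo z$, realised by matched paths whose common part sets both $y$ and $z$ true; here the maximal variables of $C$ and $D$ differ. So the ``laminarity in the variable order'' you want to extract does not follow, and the induction peeling off the top variable has no established base to stand on. The paper's own proof obtains the duplicate-pair count differently: it exploits the recursive union structure of the clause sets $\cls(\cdot,\varphi)$ along the OBDD via the counting Lemma \ref{lemma:comb} (a set built from singletons by unions yields a sequence of length $2|S|-1$), compares the $2m-1$ sets arising from the $m$ path-pairs against the $2m'-1$ sets that $m'=|\cls(\varphi)|$ clauses can support, and then uses Lemma \ref{lemma:paths-pairs} in its correct, sufficiency direction to exhibit $m-m'$ pairs whose clause-pairs are duplicated by pairs outside this excess set, before applying Corollary \ref{lemma:nodes} exactly as you do. To repair your version you would need to replace the laminarity argument by an analogue of that union-structure count (or prove the at-most-one-cycle-per-component property of your bipartite graph directly from the OBDD structure), not from the hypotheses of Lemma \ref{lemma:paths-pairs}.
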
 

\begin{proof} 
If   $ |\cls(\varphi)|\geq |\pathsf(p)|/2$ then the theorem  holds by Theorem  \ref{lemma:res_upperbound}.
We assume that   $$ |\cls(\varphi)| <|\pathsf(p)|/2$$

 Let    $m=|\pathsf(p)|/2$  and  $m'=|\cls(\varphi)|$ with $m'-m>0$; and   $p^h=\rt(\high(p))$, $p^l=\rt(\low(p))$  and  $x=\var(p)$. 
Let $P$ be the following set $$P=\{(\alpha.x.\beta, \alpha.\lnot x.\beta)\mid \exists \alpha,\beta: \alpha.x.\beta \in\pathsf(\high(p)), \alpha.\lnot x.\beta \in\pathsf(\low(p))\}$$
 Assume a function $\af': P\rightarrow \cls(\varphi)$ such that $\af'((\alpha.x.\beta, \alpha.\lnot x.\beta))=C$ if $\alpha.x.\beta\not \models C$ or 
 $\alpha.\lnot x.\beta\not \models C$ and $\bigcup_{(\alpha.x.\beta, \alpha.\lnot x.\beta)\in P}\af'((\alpha.x.\beta, \alpha.\lnot x.\beta))=\cls(p, \varphi)$.
 
 \begin{enumerate}
 \item We define the sets $S, S_1, \dots, S_l$  with $l=2m-1$ as follows: 
 \smallskip
\begin{itemize}
\item $S=P$;

\item $S_i=\{ s_i\}$ for $s_i\in S'$ with $1\leq i\leq m$;

\item $S_i=S'_j\cup S_k$ with $i>j>k$ and $m+1\leq i\leq 2m-1$.
\end{itemize}
\smallskip

 \item We define the sets $S', S'_1, \dots, S'_l$  with $l=2m-1$ as follows: 
 \smallskip
\begin{itemize}
\item $S'=\cls(p^h,\varphi) $;

\item $S_i'=\{ s_i'\}$ for $s_i'\in S'$ with $1\leq i\leq m$;

\item $S'_i=S'_j\cup S'_k$ with $i>j>k$ and $m+1\leq i\leq 2m-1$.
\end{itemize}
\smallskip
Moreover, we assume that  $S'_i=\bigcup{_{(\alpha.x.\beta, \alpha.\lnot x.\beta)\in S_i}\af'((\alpha.x.\beta, \alpha.\lnot x.\beta))}\cap \cls(p^h, \cls\varphi)$.

\smallskip
\item We define the sets $S'', S''_1, \dots, S''_l$  with $l=2m-1$ as follows: 
 \smallskip
\begin{itemize}
\item $S''=\cls(p^l,\varphi) $;

\item $S_i''=\{ s_i''\}$ for $s_i''\in S''$ with $1\leq i\leq m$;

\item $S''_i=S''_j\cup S''_k$ with $i>j>k$ and $m+1\leq i\leq 2m-1$.
\end{itemize}
\smallskip
Moreover, we assume that  $S'_i=\bigcup{_{(\alpha.x.\beta, \alpha.\lnot x.\beta)\in S_i}\af'((\alpha.x.\beta, \alpha.\lnot x.\beta))}\cap \cls(p^l, \cls\varphi)$.
\end{enumerate}
 
 Let  $A_{i}=S'_{i}\cup S''_{i}$ with $1\leq i\leq 2m-1$ and ${\cal{A}}=\{A_1, \dots, A_{2m-1}\}$.
  Now it follows from  Lemma \ref{lemma:comb} and the definition of an OBDD  that there is a set  ${\overline{\cal{A}}}=\{A_{i_1}, \dots, A_{i_{m-m'}}\}$ such that for each $\overline{A} \in {\overline{\cal{A}}}$ there is $A\in {\cal{A}}$ such that $\overline{A} \subseteq A$.  
Hence, by  Lemma \ref {lemma:paths-pairs}  there are $m-m'$ pairs 
 $$(\alpha_1.x.\beta_1, \alpha_1.\lnot x.\beta_1), \dots, (\alpha_m.x.\beta_m, \alpha_m.\lnot x.\beta_m)$$ in the set $P$, let us call this set $P'$,  such that  for each $(\alpha'.x.\beta', \alpha'.\lnot x.\beta')\in P'$ there is $(\alpha.x.\beta, \alpha.\lnot x.\beta)\in P\backslash P'$ such that 
 $\alpha.x.\beta, \alpha'.x.\beta' \not\models C$  and   $\alpha.\lnot x.\beta, \alpha'.\lnot x.\beta' \not\models D$ for some $C,D\in\cls(\varphi)$.     
By  Corollary \ref{lemma:nodes} we obtain that there is a CNF $\psi\in \respairs(p,\varphi)$ such that   
 $|\cls(\psi)|~\leq ~ |\cls(\varphi)|$.

\end{proof}

It follows  from Lemmas \ref{lemma:res_upperbound} and \ref{lemma:invariant}  that  the number of resolution steps needed to simulate elimination of node $p$ is bounded by  the minimum of  $|\pathsf(p)|/2$ and 
$|\varphi)|$.

Lemma \ref{cor:invariant} below is a straightforward consequence of Lemma \ref{lemma:invariant},  and it somewhat relates the upper bound on the number of resolution steps simulating elimination of $p$ and  the number of the $\fa$-paths that go through the node $p$ in combination with  the number of the clauses falsifying these paths (expressed by $\fc(.,.)$). 

\begin{lemma} \label{cor:invariant}  Let $\varphi$ be a CNF and $\B$ be an OBDD such that $\varphi\less \B$. Suppose 
$\B\rightarrow_{p}\B'$  for some  $p\in \node(\B)$ and   $\B'$.   Then  there is a $\psi\in \respairs(p,\varphi)$ such that   
 $$|\cls(\psi)|~  \leq ~ |\pathsf(p)|-\fc(p,\varphi)$$
\end{lemma}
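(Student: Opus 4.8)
The plan is to reduce the statement to a short counting identity and then read the bound off the proof of Lemma~\ref{lemma:invariant}. First I would evaluate $\fc(p,\varphi)$ in closed form. By definition $\fc(p,\varphi)=\sum_{C\in\cls(p,\varphi)}\rep(p,C)$ with $\rep(p,C)=|\{\alpha\in\pathsf(p)\mid\af(\alpha)=C\}|-1$. Since $\af$ sends every $\alpha\in\pathsf(p)$ to a single clause, and that clause lies in $\cls(p,\varphi)$ by definition, partitioning $\pathsf(p)$ according to the value of $\af$ shows that the sum of the counts equals $|\pathsf(p)|$, while the $-1$ terms contribute $-|\cls(p,\varphi)|$. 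Hence
\[ \fc(p,\varphi)=|\pathsf(p)|-|\cls(p,\varphi)|, \]
so that $|\pathsf(p)|-\fc(p,\varphi)=|\cls(p,\varphi)|$, and it remains only to produce $\psi\in\respairs(p,\varphi)$ with $|\cls(\psi)|\le|\cls(p,\varphi)|$.

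For the latter I would reuse the construction from the proof of Lemma~\ref{lemma:invariant}, observing that the resolvents generated while eliminating $p$ are built only from clauses attached by $\af$ to $\fa$-paths through $p$, i.e.\ from clauses of $\cls(p,\varphi)$; indeed the auxiliary map $\af'$ employed there already has image $\cls(p,\varphi)$. Running that argument with $m:=|\pathsf(p)|/2$ but $m':=|\cls(p,\varphi)|$ in place of $|\cls(\varphi)|$, Lemma~\ref{lemma:comb} together with Lemma~\ref{lemma:paths-pairs} supplies $m-m'$ pairs of $\fa$-paths whose resolvents are subsumed, and Corollary~\ref{lemma:nodes} then gives $|\cls(\psi)|\le|\pathsf(p)|/2-(m-m')=m'=|\cls(p,\varphi)|$. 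When instead $|\cls(p,\varphi)|\ge|\pathsf(p)|/2$, Lemma~\ref{lemma:res_upperbound} alone already yields $|\cls(\psi)|\le|\pathsf(p)|/2\le|\cls(p,\varphi)|$, so the two cases together establish the claim.

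The step I expect to need the most care is the justification that $|\cls(p,\varphi)|$, rather than the global $|\cls(\varphi)|$ used in Lemma~\ref{lemma:invariant}, is the correct quantity to feed into the combinatorial counting: one must verify that clauses of $\varphi$ not assigned by $\af$ to any path through $p$ are irrelevant to simulating the elimination of $p$. This is precisely what makes the bound an \emph{invariant}, and it rests on the fact that every pair of $\fa$-paths through $p$ is mapped by $\af'$ into $\cls(p,\varphi)$, so the number of resolvents that cannot be absorbed is controlled by $|\cls(p,\varphi)|$ and not by the full clause set of $\varphi$.
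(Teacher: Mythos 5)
Your proof is correct and follows essentially the same route as the paper: the paper's entire proof consists of the counting identity $|\pathsf(p)|-\fc(p,\varphi)=|\cls(p,\varphi)|$ (which you derive, as intended, by partitioning $\pathsf(p)$ along the fibres of $\af$) followed by the bare assertion that the resulting bound ``trivially holds.'' Your second part --- rerunning the argument of Lemma~\ref{lemma:invariant} with $m'=|\cls(p,\varphi)|$ in place of $|\cls(\varphi)|$, observing that the auxiliary map $\af'$ there already has image $\cls(p,\varphi)$, and disposing of the case $|\cls(p,\varphi)|\geq|\pathsf(p)|/2$ via Lemma~\ref{lemma:res_upperbound} --- merely makes explicit what the paper leaves implicit, and is indeed the needed justification, since the bound $|\cls(\varphi)|$ of Lemma~\ref{lemma:invariant} as literally stated would be too weak to give $|\cls(\psi)|\leq|\cls(p,\varphi)|$.
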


\begin{proof} We take into account that by definition of $\fc(.,.)$, 
$|\pathsf(p)|-\fc(p,\varphi)=|\cls(p,\varphi)|$,
and the lemma trivially holds.

\end{proof}

\subsection{Invariant}

Now we will prove that although    resolution steps generate new clauses,   the number of resolution steps needed to simulate  elimination of a node  remains  bounded by  the number of clauses   encoded by this OBDD. In fact, we demonstrate a kind of monotonicity 
 expressed by Lemma  \ref{lemma:mon1}.


\begin{lemma}\label{lemma:mon1} Assume a CNF  $\varphi$  and an OBDD $\B$  with $\varphi\less \B$.  Suppose 
$\B\rightarrow_{p}\B'$   and  $q\rightarrow_{p}q'$ for $p,q \in\node(\B)$ and $q'\in\node(\B')$.  Let $\high(q')\isomBDD \low(q')$.
Then there is a  $\varphi' \in \respairs(p,\varphi)$ such that 
$$|\pathsf(q')|- \fc(q',\varphi\wedge \varphi')  \leq |\pathsf(q)|-\fc(q, \varphi)$$
\end{lemma}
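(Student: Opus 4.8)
The plan is to reduce the stated inequality to a comparison of numbers of \emph{distinct} clauses. By the identity established in the proof of Lemma~\ref{cor:invariant}, $|\pathsf(q)|-\fc(q,\varphi)=|\cls(q,\varphi)|$ and likewise $|\pathsf(q')|-\fc(q',\varphi\wedge\varphi')=|\cls(q',\varphi\wedge\varphi')|$, so it suffices to exhibit a CNF $\varphi'\in\respairs(p,\varphi)$ with
$$|\cls(q',\varphi\wedge\varphi')|\ \le\ |\cls(q,\varphi)|.$$
For $\varphi'$ I would take exactly the set of resolvents used to simulate the elimination of $p$ as supplied by Lemma~\ref{lemma:res_upperbound}, so that $\varphi\wedge\varphi'\less\B'$ holds by the construction of $\respairs(p,\varphi)$ and Corollary~\ref{lem:alg}.

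First I would describe how the $\fa$-paths transform under $\B\rightarrow_p\B'$. Since the intermediate OBDDs are assumed merged, $\high(p)\isomBDD\low(p)$ forces both edges out of $p$ to the same node, so the elimination merges each pair $\alpha.x.\beta,\alpha.\lnot x.\beta\in\pathsf(p)$ into the single path $\alpha.\beta$ and leaves every path missing $p$ untouched. Restricting to paths through $q$, this partitions $\pathsf(q)$ into a set $U$ of paths that avoid $p$ (unchanged in $\B'$) and a set of merging pairs; the same picture holds whether $p$ lies below $q$ (suffixes merge) or above $q$ (prefixes merge), and when $q$ and $p$ are incomparable $U=\pathsf(q)$ and the claim is immediate with $\af'=\af$. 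I would then fix the witnessing functions: set $\af'=\af$ on $U$, and on a merged path $\alpha.\beta$ let $\af'(\alpha.\beta)$ be a parent clause $\af(\alpha.x.\beta)$ or $\af(\alpha.\lnot x.\beta)$ whenever one of them already falsifies $\alpha.\beta$ (i.e. does not mention $x$), and otherwise the resolvent $\res(\af(\alpha.x.\beta),\af(\alpha.\lnot x.\beta))$, which falsifies $\alpha.\beta$ because both parents force their $x$-free parts false and the complementary-literal condition of Lemma~\ref{lemma:paths-pairs} keeps the resolvent non-tautological.

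With these choices $\cls(q,\varphi)=\af(U)\cup Y$, where $Y$ collects the parent clauses of all pairs, and $\cls(q',\varphi\wedge\varphi')=\af(U)\cup Z$, where $Z$ collects the reused parents and the freshly created resolvents. Writing $X=\af(U)$, the bound $|X\cup Z|\le|X\cup Y|$ is equivalent to $|Z\setminus X|\le|Y\setminus X|$, so the heart of the argument is an injection of $Z\setminus X$ into $Y\setminus X$: reused parents already lie in $Y$, and each genuine resolvent $\res(C,D)$ is to be charged to one of its parents $C,D\in Y$. Crucially, the parents of a genuine resolvent mention $x$, hence are distinct from every resolvent and every reused parent (all of which are $x$-free), so the charging of resolvents lands outside the reused part of the image.

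The step I expect to be the main obstacle is showing that this charging can be made \emph{injective} across all pairs simultaneously, i.e. ruling out the configuration in which a resolvent $\res(C,D)$ is genuinely needed while both parents $C$ and $D$ also label paths of $U$ — for then $C$, $D$ and $\res(C,D)$ would all belong to $\cls(q',\varphi\wedge\varphi')$ and force a net increase. This is exactly the point at which the assumption that clauses of $\varphi$ are not subsumed during the construction, together with the structural conclusion of Lemma~\ref{lemma:paths-pairs}, must be used: they guarantee that a parent mentioning $x$ that is forced by a pair through $p$ cannot simultaneously be forced by a path avoiding $p$, so at least one parent of each resolvent is free to receive it. Making this exclusion precise and coordinating the choices of $\af$ and $\af'$ so that the assignment of parents is consistent for all pairs at once — a system-of-distinct-representatives argument in the spirit of Lemma~\ref{lemma:comb} — is the technical core of the proof.
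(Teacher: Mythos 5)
Your plan follows the same route as the paper's proof: the paper likewise takes $\varphi'$ to be the CNF constructed in the proof of Lemma~\ref{lemma:res_upperbound}, splits into the same three cases according to the relative position of $p$ and $q$ (not connected by a path; $q\in\node(\B^p)$; $p\in\node(\B^q)$), disposes of the disconnected case by noting $\pathsf(q')=\pathsf(q)$ and $\fc(q',\varphi\wedge\varphi')=\fc(q,\varphi)$, and rests on the same central observation, stated there as: $\alpha.\beta\not\models\res(C,D)$ if and only if $\alpha.x.\beta\not\models C$ and $\alpha.\lnot x.\beta\not\models D$. Your reduction to counting distinct clauses via the identity $|\pathsf(q)|-\fc(q,\varphi)=|\cls(q,\varphi)|$ is a clean reformulation of what the paper uses only implicitly, and your charging scheme is a more explicit rendering of the same idea. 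One small deviation: the construction in Lemma~\ref{lemma:res_upperbound} allows the merged path $\alpha.\beta$ to be labelled by \emph{any} clause of $\varphi$ falsifying it, not only by one of the two parents; your restriction to parents is stricter but harmless.

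The genuine gap is the one you name yourself and then defer: you never establish that the charging of freshly created resolvents to parent clauses can be made injective across all merging pairs simultaneously, i.e.\ you never exclude the configuration in which $\res(C,D)$ is needed while both $C$ and $D$ continue to falsify paths through $q'$ that avoid $p$ --- the only configuration under which $|\cls(q',\varphi\wedge\varphi')|$ could exceed $|\cls(q,\varphi)|$. Calling this ``the technical core'' and pointing at Lemma~\ref{lemma:paths-pairs} and the no-subsumption convention is a correct diagnosis of both the obstacle and the intended tool, but it is not an argument: you do not show how the conclusion of Lemma~\ref{lemma:paths-pairs} (the existence of a pair $\alpha_1.x.\alpha_2\in\pathsf(\high(p))$, $\alpha_1.\lnot x.\alpha_2\in\pathsf(\low(p))$ falsifying $C$ and $D$ respectively) yields the system-of-distinct-representatives coordination you require. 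This is exactly the content the paper assigns to its third case, where $\cls(q,\varphi)$ is partitioned into $\clsset^{q}$ (clauses falsified only off $p$), $\clsset^{p}$ (only through $p$), and the mixed set $\clsset^{q,p}$, with Lemma~\ref{lemma:paths-pairs} invoked precisely for $\clsset^{q,p}$; your set $U$ and your problematic parents correspond to $\clsset^{q}$ and $\clsset^{q,p}$. In fairness, the paper's own treatment is scarcely more detailed at this point --- its second case asserts the inequality directly from the resolvent-falsification observation, and its third case instructs the reader to reuse the earlier arguments and apply Lemma~\ref{lemma:paths-pairs} without spelling out the bookkeeping --- so your proposal reaches the same structure and stalls at the same step; but as a freestanding proof it leaves the inequality unestablished exactly in the mixed case.
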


%
%
%
%
%
%
%
%
%

\begin{proof}  
\begin{enumerate}
\item  \label{inv1}  Suppose the nodes  $p$ and  $q$ are not connected by a path. Then removing the node $p$ does not affect the $\fa$-paths that go through the node $q$. That is,   $\pathsf(q')=\pathsf(q)$ and $  \fc(q',\varphi\wedge \varphi')=\fc(q, \varphi)$. Hence, 
$$|\pathsf(q')|- \fc(q',\varphi\wedge \varphi')  =|\pathsf(q')|- \fc(q',\varphi)  = |\pathsf(q)|-\fc(q, \varphi)$$

Now we assume that $p$ and $q$ are connected by a path.  We construct $\varphi'$ as it is defined in the proof of  Lemma \ref{lemma:res_upperbound} and  we consider following cases.

\item \label{inv2} Let $q\in\node(\B^p)$ where $\B^p$ is the subOBDD of $\B$ rooted at the node $p$. 

We observe that $$\alpha.\beta\not\models \res(C,D)$$ if and only if  $\alpha.x.\beta  \not\models C$ and $\alpha.\lnot x.\beta\not\models D$ for some  $\alpha.\beta \in \pathsf(\high(p))$ (alternatively, we could use $ \pathsf(\low(p))$ as   $\high(p)
\isomBDD \low(p)$) and $C\in\cls(\rt(\high(p)), \varphi)$, $D\in\cls(\rt(\low(p)), \varphi)$. 
Hence, 
$$|\pathsf(q')|- \fc(q',\varphi\wedge \varphi')  \leq |\pathsf(q)|-\fc(q, \varphi)$$

\item \label{inv3}  Let $p\in\node(\B^q)$ where $\B^q$ is the subOBDD of $\B$ rooted at the node $q$.  

Let $\overline{\pathsf}(p)=\pathsf(q)\backslash \pathsf(p)$. Let 
$${\clsset}^{p}=\{C\in\cls(p,\varphi)\mid \lnot\exists \alpha\in \pathsf(q)\backslash \pathsf(p): \alpha\not\models C\}$$
$${\clsset}^{q}=\{C\in\cls(q,\varphi)\mid \lnot\exists \alpha\in \pathsf(p): \alpha\not\models C\}$$
$${\clsset}^{q,p}=\{C\in\cls(q,\varphi)\mid \exists \alpha\in \pathsf(p), \beta\in \pathsf(q)\backslash\pathsf(p): \alpha, \beta\not\models C\}$$
We use the same arguments as in case \ref{inv1} for the clauses in ${\clsset}^{q}$, the same arguments as in case \ref{inv2} for the clauses in ${\clsset}^{p}$ and apply 
  Lemma \ref{lemma:paths-pairs} for the clauses in ${\clsset}^{q,p}$.
\end{enumerate}

\end{proof}

\begin{lemma}\label{lemma:con} Assume CNFs $\varphi_1$ and $\varphi_2$, and OBDDs $\B_1$ and $\B_2$ with $\varphi_1\less \B_1$ and  $\varphi_2\less \B_2$.
  Let for any $q_1\in \node(B_1)$ and $q_2\in \node(B_2)$, and some  $k_1, k_2\geq 0$
\begin{itemize}
\item $|\pathsf(q_1)|- \fc(q_1,\varphi_1)\leq k_1 $

\item $|\pathsf(q_2)|- \fc(q_2,\varphi_2)\leq k_2 $
\end{itemize}
Then Algorithm \ref{algo:and} returns the OBDD $\B_1\land \B_2$ such that for any $q\in \node(\B_1\wedge \B_2)$
$$|\pathsf(q)|- \fc(q,\varphi_1\wedge \varphi_2)\leq k_1+k_2$$

\end{lemma}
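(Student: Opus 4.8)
The plan is to first rewrite both sides of the claimed inequality in terms of clause-sets. As already observed in the proof of Lemma~\ref{cor:invariant}, for any node $r$, OBDD $\B$ and CNF $\varphi$ with $\varphi\less\B$ one has the identity $|\pathsf(r)|-\fc(r,\varphi)=|\cls(r,\varphi)|$, because $\af$ sends every $\fa$-path through $r$ to exactly one clause and $\fc$ subtracts precisely the multiplicities. I would state this reduction first, so that the lemma becomes the purely set-theoretic statement that $|\cls(q,\varphi_1\wedge\varphi_2)|\le k_1+k_2$ whenever $|\cls(q_1,\varphi_1)|\le k_1$ and $|\cls(q_2,\varphi_2)|\le k_2$ for all nodes of $\B_1$ and $\B_2$, and work with clause-sets from then on.

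Second, I would use the structural fact recorded in the proof of Theorem~\ref{theorem:size} that every node $q\in\node(\B_1\wedge\B_2)$ is the root of a subOBDD of the form $\B_1^{q_1}\wedge\B_2^{q_2}$, where $q_1\in\node(\B_1)$ and $q_2\in\node(\B_2)$ are the nodes at which the parallel traversal of $\B_1$ and $\B_2$ sits when it reaches $q$ (one component may stand still while the other advances). By Corollary~\ref{lem:alg} we have $\varphi_1\wedge\varphi_2\less\B_1\wedge\B_2$, so an $\af$ exists; I would fix the canonical one inherited from $\af_1,\af_2$: for each $\alpha\in\pathsf(\B_1\wedge\B_2)$, Lemma~\ref{lem:alg1} supplies a $\fa$-path $\beta$ of $\B_1$ or of $\B_2$ with $\beta\subseteq\alpha$, and I set $\af(\alpha)=\af_1(\beta)$ or $\af(\alpha)=\af_2(\beta)$ accordingly. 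This is legitimate since $\beta\subseteq\alpha$ gives $\alpha\not\models\af_i(\beta)$.

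The heart of the proof is the containment
\[\cls(q,\varphi_1\wedge\varphi_2)\subseteq\cls(q_1,\varphi_1)\cup\cls(q_2,\varphi_2).\]
Given $\alpha\in\pathsf(q)$ with, say, $\af(\alpha)=\af_1(\beta)$ and $\beta\in\pathsf(\B_1)$, $\beta\subseteq\alpha$, I would argue that $\beta$ passes through $q_1$. The key point is that $\beta$, being a $\fa$-path of $\B_1$ contained in $\alpha$, coincides with the traversal of $\B_1$ driven by the decisions recorded in $\alpha$: at each node of $\beta$ the branch taken is dictated by a literal of $\beta\subseteq\alpha$ and hence agrees with $\alpha$. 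Since $\alpha$ reaches the internal product node $q=(q_1,q_2)$ before reaching $\fa$, neither component has yet reached a leaf, so the $\B_1$-component of this traversal sits at the internal node $q_1$ at that moment; thus $q_1$ lies on $\beta$ and $\af(\alpha)=\af_1(\beta)\in\cls(q_1,\varphi_1)$. The symmetric case gives $\af(\alpha)\in\cls(q_2,\varphi_2)$. The containment then yields $|\cls(q,\varphi_1\wedge\varphi_2)|\le|\cls(q_1,\varphi_1)|+|\cls(q_2,\varphi_2)|\le k_1+k_2$, which with the identity of the first paragraph proves the lemma.

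I expect the main obstacle to be exactly this through-node argument: making rigorous that the $\fa$-path $\beta$ furnished by Lemma~\ref{lem:alg1} is the one obtained by projecting $\alpha$ onto the corresponding component, and therefore visits $q_1$ (respectively $q_2$). This needs careful bookkeeping of the correspondence between product nodes and pairs $(q_1,q_2)$ in Algorithm~\ref{algo:and}, together with the verification that the relevant component node is visited strictly before $\fa$ along the traversal. Everything else—the reduction to clause-sets and the final counting—is routine once this correspondence is pinned down.
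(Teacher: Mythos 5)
Your opening reduction (the identity $|\pathsf(r)|-\fc(r,\varphi)=|\cls(r,\varphi)|$) and your construction of the inherited labelling $\af$ from $\af_1,\af_2$ via Lemma~\ref{lem:alg1} both match the paper. But the step you yourself flag as the main obstacle hides a genuine gap: your containment $\cls(q,\varphi_1\wedge\varphi_2)\subseteq\cls(q_1,\varphi_1)\cup\cls(q_2,\varphi_2)$ presumes that each node $q$ of $\B_1\wedge\B_2$ determines a \emph{unique} pair $(q_1,q_2)$. It does not: the function $\newnode$ in Algorithm~\ref{algo:and} returns an already existing node whenever the same triple recurs (and the paper moreover assumes all intermediate OBDDs are merged), so distinct pairs $(q_1,q_2)\neq(q_1',q_2')$ of source nodes can be identified in the output graph, and different paths through the same $q$ arrive with different component pairs. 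Your traversal argument therefore only yields $\af(\alpha)\in\cls(q_1^{\alpha},\varphi_1)\cup\cls(q_2^{\alpha},\varphi_2)$ with the pair depending on $\alpha$, and a union over several such pairs is not bounded by $k_1+k_2$ by the containment you state. The conclusion survives, but only because the hypothesis is assumed for \emph{all} nodes of $\B_1$ and $\B_2$, in particular the roots: since $\cls(q_i^{\alpha},\varphi_i)\subseteq\cls(\rt(\B_i),\varphi_i)$ always, you may replace every pair by the pair of roots and obtain $\cls(q,\varphi_1\wedge\varphi_2)\subseteq\cls(\rt(\B_1),\varphi_1)\cup\cls(\rt(\B_2),\varphi_2)$, of size at most $k_1+k_2$. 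For this weaker containment you only need the first bullet of Lemma~\ref{lem:alg1} (every $\fa$-path of the product contains a $\fa$-path of $\B_1$ or of $\B_2$), so the delicate product-node bookkeeping you anticipated is not needed at all.

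Once repaired this way, your argument essentially becomes the paper's proof, which never decomposes node-by-node. The paper first observes the monotonicity $|\pathsf(q)|-\fc(q,\varphi)\leq|\pathsf(\B)|-\fc(\rt(\B),\varphi)$ (every path through $q$ passes through the root, so $\cls(q,\varphi)\subseteq\cls(\rt(\B),\varphi)$), reducing the claim to the single node $\rt(\B_1\wedge\B_2)$; it then proves additivity at the root by splitting $\pathsf(\B_1\wedge\B_2)$ into the paths labelled from $\cls(\varphi_1)$ and from $\cls(\varphi_2)$, introducing the path surpluses $m_1,m_2$, which cancel in the final computation $\fc(\rt(\B_1\wedge\B_2),\varphi_1\wedge\varphi_2)=\fc(\rt(\B_1),\varphi_1)+\fc(\rt(\B_2),\varphi_2)+m_1+m_2$. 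Both routes rest on the same two pillars (the $\fa$-path correspondence of Lemma~\ref{lem:alg1} and the inherited $\af$); your local version would, if the sharing issue were handled, give per-node information that is strictly finer than what the lemma asks for, whereas the paper's root-level argument is coarser but immune to node identification and shorter.
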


\begin{proof}  
Observe that for any CNF $\varphi$ and an OBDD $\B$ such that $\varphi\less \B$ and $q\in \node(\B)$, 
$$|\pathsf(q)|- \fc(q,\varphi) \leq |\pathsf(\B)|- \fc(\rt(\B),\varphi)$$

We recall that by Lemma \ref{lem:alg}, 
   $\varphi_1\land \varphi_2\less\B_1\land 
\B_2$ and define the sets $S_1$ and $S_2$ as follows:
\begin{itemize}
\item $S_1=\{\alpha \in \pathsf(\B_1\wedge \B_2) \mid \af(\alpha)\in\cls(\varphi_1)\}$;

\item $S_2=\{\alpha \in \pathsf(\B_1\wedge \B_2) \mid \af(\alpha)\in\cls(\varphi_2)\}$.
\end{itemize}
%
That is,  the set $S_1$ contains the $\fa$-paths of $\B_1\wedge \B_2$ falsified by the clauses of $\varphi_1$ and  $S_2$ contains the $\fa$-paths of $\B_1\wedge \B_2$ falsified by the clauses of $\varphi_2$.
Suppose
\begin{itemize}
\item $m_1=|S_1|-|\pathsf(\rt(\B_1))| $;
\item $m_2=|S_2|-|\pathsf(\rt(\B_2))|$.
\end{itemize}
It follows from Lemma  \ref{lem:alg1} and the definition of $\fc(.,.)$ that 
$$ \fc(\rt(\B_1\wedge \B_2),\varphi_1\wedge \varphi_2) =  \fc(\rt(\B_1),\varphi_1) +  \fc(\rt(\B_2),\varphi_2)+m_1+m_2$$

and therefore for any $q\in\node(\varphi_1\wedge \varphi_2)$
\begin{equation*}
 \begin{aligned}
      |\pathsf(q)|- \fc(q,\varphi_1\wedge \varphi_2)& \leq |\pathsf(\B_1\wedge \B_2)|- \fc(\rt(\B_1\wedge\B_2),\varphi_1\wedge \varphi_2)\\                                                                                  &= (|\pathsf(\B_1)| + |\pathsf(\B_2)|+ m_1+m_2) - & \\
                                                                                 & ~~~~  (\fc(\rt(\B_1),\varphi_1) +  \fc(\rt(\B_2),\varphi_2)+ m_1 +m_2) & \\
                                                                                 &\leq  k_1+k_2 &
 \end{aligned}
\end{equation*}

\end{proof}

Now we combine the results established by  Lemmas \ref{lemma:invariant}-\ref{lemma:con} and obtain the following corollary.

\begin{corollary}\label{cor:ub}  Assume an unsatisfiable CNF $\varphi$. Let $\B_1, \dots, \B_k$  be  an OBDD refutation of $\varphi$. Then elimination of a node in any OBDD $\B_i$, $1\leq i\leq k$,  can be simulated by at most $|\cls(\varphi)|$ resolution steps.
\end{corollary}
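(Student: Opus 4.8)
The plan is to isolate a single numerical invariant and follow it through the entire refutation. For an OBDD $\B$ encoding a CNF $\psi$ via the fixed map $\af$, set
$$\mu(\B)=\max_{q\in\node(\B)}\bigl(|\pathsf(q)|-\fc(q,\psi)\bigr).$$
By the computation already used in the proof of Lemma~\ref{cor:invariant}, $|\pathsf(q)|-\fc(q,\psi)=|\cls(q,\psi)|$, so $\mu(\B)$ is simply the largest number of distinct clauses of $\psi$ that falsify the $\fa$-paths through one node. Lemma~\ref{cor:invariant} then says that eliminating any node $p$ of $\B$ costs at most $|\pathsf(p)|-\fc(p,\psi)\le\mu(\B)$ resolution steps. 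Hence it suffices to prove $\mu\le|\cls(\varphi)|$ not only for each $\B_i$ in the refutation but for every intermediate OBDD produced while reducing each $\B_i$, since the eliminations we must simulate are applied to those intermediate diagrams.

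I would establish the stronger bound $\mu(\B_i)\le n_i$ by induction on the construction of the refutation, where $n_i$ is the number of input clauses (axioms) used to build $\B_i$; as each clause of $\varphi$ is consumed exactly once we have $n_i\le|\cls(\varphi)|$, and the corollary follows. For the base case an axiom $\B_i\isomBDD C_i$ is a single-clause OBDD with exactly one $\fa$-path, so $|\cls(q,C_i)|\le1=n_i$ for every node $q$. For a join $\B_i\isomBDD\B_{j'}^{\downarrow}\wedge\B_{j''}^{\downarrow}$ I apply Lemma~\ref{lemma:con} with $k_1=\mu(\B_{j'}^{\downarrow})$ and $k_2=\mu(\B_{j''}^{\downarrow})$ to get $\mu(\B_{j'}\wedge\B_{j''})\le n_{j'}+n_{j''}$; because the once-used convention makes the derivation a tree, the two subtrees consume disjoint sets of axioms and $n_{j'}+n_{j''}=n_i$.

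It remains to pass from the raw join $\B_{j'}\wedge\B_{j''}$ down to $\B_i^{\downarrow}$, which is the core of the argument: the encoded CNF grows as resolvents are appended, so I must show that each elimination never raises $\mu$ with respect to the enlarged CNF. This is exactly Lemma~\ref{lemma:mon1}, which for $\B\rightarrow_p\B'$ and $q\rightarrow_p q'$ gives $|\pathsf(q')|-\fc(q',\varphi\wedge\varphi')\le|\pathsf(q)|-\fc(q,\varphi)$, hence $\mu(\B')\le\mu(\B)$; merging is absorbed into the standing convention that intermediate OBDDs are kept merged, since it only identifies isomorphic subgraphs. Chaining the base case, the join step, and this monotonicity through every reduction step yields $\mu\le n_i\le|\cls(\varphi)|$ at all intermediate diagrams, and Lemma~\ref{cor:invariant} converts this into the stated bound of $|\cls(\varphi)|$ resolution steps per elimination.

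The main obstacle is precisely the join accounting. Tracking the crude bound $\mu(\B_{j'}^{\downarrow}),\mu(\B_{j''}^{\downarrow})\le|\cls(\varphi)|$ and summing them via Lemma~\ref{lemma:con} would double at every join and blow up geometrically in the depth of the join tree. The whole result therefore hinges on the additive, waste-free bookkeeping $n_{j'}+n_{j''}=n_i$, which is only legitimate because every input clause is used exactly once; this is what keeps the cumulative invariant pinned at $|\cls(\varphi)|$ rather than accumulating, and it is the step I expect to require the most care to justify rigorously.
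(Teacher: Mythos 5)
Your proposal is correct and takes essentially the same route as the paper: the paper's entire proof of this corollary is the one-line instruction to combine Lemmas \ref{lemma:invariant}--\ref{lemma:con}, and your argument is precisely that combination made explicit, with $|\pathsf(q)|-\fc(q,\cdot)$ as the invariant, Lemma \ref{lemma:con} handling joins additively, Lemma \ref{lemma:mon1} giving monotonicity under elimination, and Lemma \ref{cor:invariant} converting the invariant into the step bound. Your per-axiom bookkeeping $n_{j'}+n_{j''}=n_i$, justified by the paper's convention that each OBDD is used exactly once (so the derivation is a tree on $|\cls(\varphi)|$ axiom leaves), simply spells out what the paper leaves implicit.
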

%
%

\begin{example}  \normalfont
The CNFs $\php_n$, $n\geq 1$,  formalising the pigeonhole principle is presented in Example \ref{ex:gphp}. 
We consider the OBDD refutation of  $\php_2$  depicted in Figure \ref{fig:php2}. 
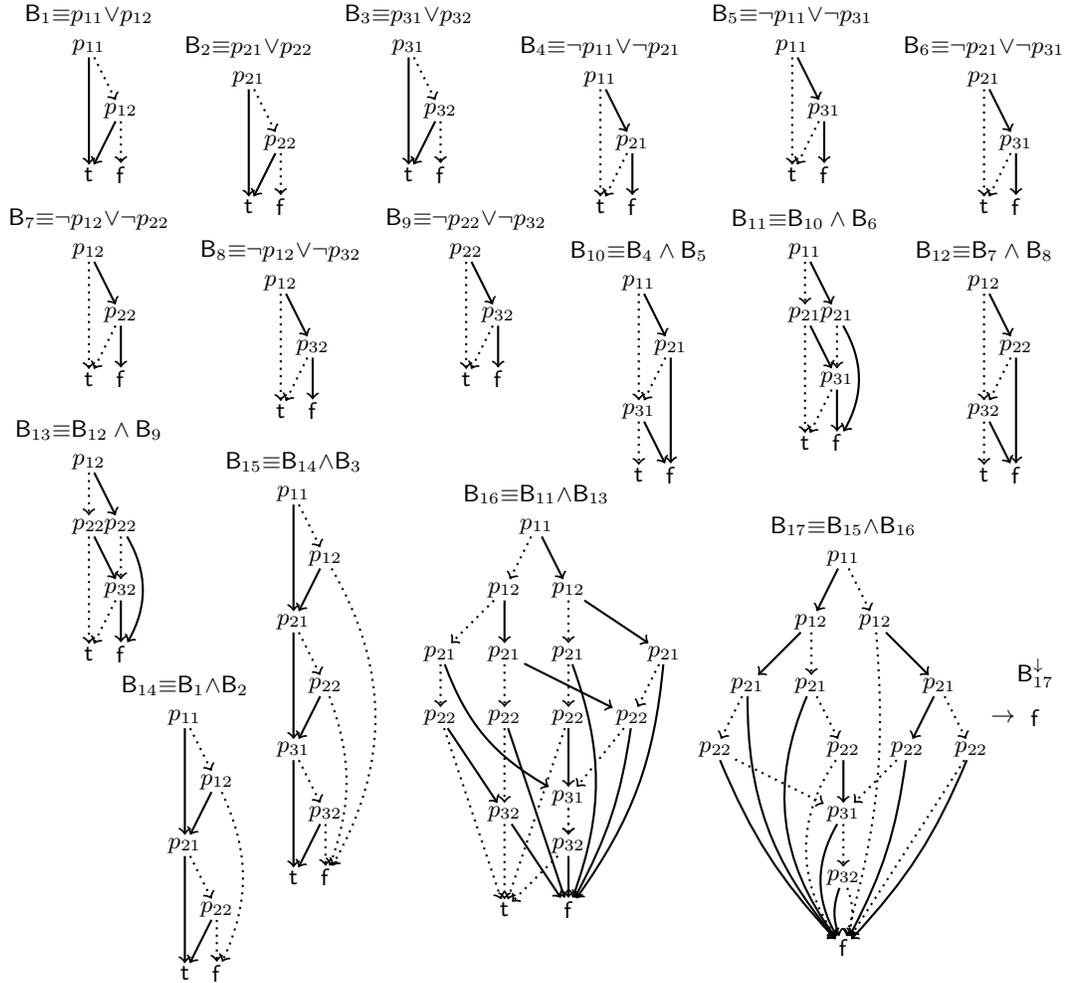
\begin{figure}[t]
 \begin{center}
\begin{tikzpicture}
  [scale=0.85,
  inner sep = 1 pt]             

\begin{scope}[shift={(-8,0.5)}]


  \node () at (0,2.5) {$\B_1{\equiv}p_{11}{\lor} p_{12}$};

\node (p11) at (0,2) {$p_{11}$};
\node (p12) at (0.5,1) {$p_{12}$};
\node (t)  at (0,0) {$\Tr$}; 
\node (f)  at (0.5,0) {$\Fa$}; 

\path[->, thick]
(p11) edge  (t) 
(p12) edge (t)
;  
 
\path[->,dotted, thick]
(p11) edge  (p12) 
(p12) edge  (f)
;
\end{scope}

\begin{scope}[shift={(-5.5,0)}]

\node () at (0,2.5) {$\B_2{\equiv}p_{21}{\lor} p_{22}$};

\node (p11) at (0,2) {$p_{21}$};
\node (p12) at (0.5,1) {$p_{22}$};
\node (t)  at (0,0) {$\Tr$}; 
\node (f)  at (0.5,0) {$\Fa$}; 

\path[->, thick]
(p11) edge  (t) 
(p12) edge (t)
;  
 
\path[->,dotted, thick]
(p11) edge  (p12) 
(p12) edge  (f)
;
\end{scope}

\begin{scope}[shift={(-3,0.5)}]

\node () at (0,2.5) {$\B_3{\equiv}p_{31}{\lor} p_{32}$};

\node (p11) at (0,2) {$p_{31}$};
\node (p12) at (0.5,1) {$p_{32}$};
\node (t)  at (0,0) {$\Tr$}; 
\node (f)  at (0.5,0) {$\Fa$}; 

\path[->, thick]
(p11) edge  (t) 
(p12) edge (t)
;  
 
\path[->,dotted, thick]
(p11) edge  (p12) 
(p12) edge  (f)
;
\end{scope}

\begin{scope}[shift={(0,0)}]

\node () at (0,2.5) {$\B_4{\equiv}\lnot p_{11}{\lor} \lnot p_{21}$};

\node (p11) at (0,2) {$p_{11}$};
\node (p12) at (0.5,1) {$p_{21}$};
\node (t)  at (0,0) {$\Tr$}; 
\node (f)  at (0.5,0) {$\Fa$}; 

\path[->,dotted, thick]
(p11) edge  (t) 
(p12) edge (t)
;  
 
\path[->, thick]
(p11) edge  (p12) 
(p12) edge  (f)
;
\end{scope}

\begin{scope}[shift={(3,0.5)}]

\node () at (0,2.5) {$\B_5{\equiv}\lnot p_{11}{\lor} \lnot p_{31}$};

\node (p11) at (0,2) {$p_{11}$};
\node (p12) at (0.5,1) {$p_{31}$};
\node (t)  at (0,0) {$\Tr$}; 
\node (f)  at (0.5,0) {$\Fa$}; 

\path[->,dotted, thick]
(p11) edge  (t) 
(p12) edge (t)
;  
 
\path[->, thick]
(p11) edge  (p12) 
(p12) edge  (f)
;
\end{scope}

\begin{scope}[shift={(6,0)}]

\node () at (0,2.5) {$\B_6{\equiv}\lnot p_{21}{\lor} \lnot p_{31}$};

\node (p11) at (0,2) {$p_{21}$};
\node (p12) at (0.5,1) {$p_{31}$};
\node (t)  at (0,0) {$\Tr$}; 
\node (f)  at (0.5,0) {$\Fa$}; 

\path[->,dotted, thick]
(p11) edge  (t) 
(p12) edge (t)
;  
 
\path[->, thick]
(p11) edge  (p12) 
(p12) edge  (f)
;
\end{scope}

\begin{scope}[shift={(-8,-2.7)}]

\node () at (0,2.5) {$\B_7{\equiv}\lnot p_{12}{\lor} \lnot p_{22}$};

\node (p11) at (0,2) {$p_{12}$};
\node (p12) at (0.5,1) {$p_{22}$};
\node (t)  at (0,0) {$\Tr$}; 
\node (f)  at (0.5,0) {$\Fa$}; 

\path[->,dotted, thick]
(p11) edge  (t) 
(p12) edge (t)
;  
 
\path[->, thick]
(p11) edge  (p12) 
(p12) edge  (f)
;
\end{scope}

\begin{scope}[shift={(-5,-3.2)}]

\node () at (0,2.5) {$\B_8{\equiv}\lnot p_{12}{\lor} \lnot p_{32}$};

\node (p11) at (0,2) {$p_{12}$};
\node (p12) at (0.5,1) {$p_{32}$};
\node (t)  at (0,0) {$\Tr$}; 
\node (f)  at (0.5,0) {$\Fa$}; 

\path[->,dotted, thick]
(p11) edge  (t) 
(p12) edge (t)
;  
 
\path[->, thick]
(p11) edge  (p12) 
(p12) edge  (f)
;
\end{scope}

\begin{scope}[shift={(-2.1,-2.7)}]

\node () at (0,2.5) {$\B_9{\equiv}\lnot p_{22}{\lor} \lnot p_{32}$};

\node (p11) at (0,2) {$p_{22}$};
\node (p12) at (0.5,1) {$p_{32}$};
\node (t)  at (0,0) {$\Tr$}; 
\node (f)  at (0.5,0) {$\Fa$}; 

\path[->,dotted, thick]
(p11) edge  (t) 
(p12) edge (t)
;  
 
\path[->, thick]
(p11) edge  (p12) 
(p12) edge  (f)
;
\end{scope}

\begin{scope}[shift={(-6.5,-10)}]

\node () at (0,2.5) {$\B_{14}{\equiv}\B_1{\land} \B_2$};

\node (p11) at (0,2) {$p_{11}$};
\node (p12) at (0.5,1) {$p_{12}$};
\node (p21) at (0,0) {$p_{21}$};
\node (p22) at (0.5,-1) {$p_{22}$};
\node (t)  at (0,-2) {$\Tr$}; 
\node (f)  at (0.5,-2) {$\Fa$}; 

\path[->, thick]
(p11) edge  (p21) 
(p12) edge (p21)
(p21) edge  (t) 
(p22) edge (t)
;  
 
\path[->,dotted, thick]
(p11) edge  (p12) 
(p12) edge[bend left=25]             (f)
(p21) edge  (p22) 
(p22) edge  (f)                                                                                                                       
;


\end{scope}

\begin{scope}[shift={(-4.8,-6.5)}]

\node () at (0,2.5) {$\B_{15}{\equiv}\B_{14}{\land} \B_3$};

\node (p11) at (0,2) {$p_{11}$};
\node (p12) at (0.5,1) {$p_{12}$};
\node (p21) at (0,0) {$p_{21}$};  
\node (p22) at (0.5,-1) {$p_{22}$};
\node (p31) at (0,-2) {$p_{31}$};
\node (p32) at (0.5,-3) {$p_{32}$};
\node (t)  at (0,-4) {$\Tr$}; 
\node (f)  at (0.5,-4) {$\Fa$}; 

\path[->, thick]
(p11) edge  (p21) 
(p12) edge (p21)
(p21) edge  (p31) 
(p22) edge (p31)
(p31) edge  (t)
(p32) edge (t)
;  
 
\path[->,dotted, thick]
(p11) edge  (p12) 
(p12) edge[bend left=30]             (f)
(p21) edge  (p22) 
(p22) edge[bend left=20]             (f)
(p31) edge  (p32) 
(p32) edge  (f)
;

\end{scope}

\begin{scope}[shift={(0.6,-3.2)}]

\node () at (0,2.5) {$\B_{10}{\equiv}\B_4\wedge \B_5$};
\node (p11) at (0,2) {$p_{11}$};
\node (p21) at (0.5,1) {$p_{21}$};
\node (p31) at (0,0) {$p_{31}$};
\node (t)  at (0,-1) {$\Tr$}; 
\node (f)  at (0.5,-1) {$\Fa$}; 

\path[->, thick]
(p11) edge  (p21) 
(p21) edge  (f) 
(p31) edge (f)
;  
 
\path[->,dotted, thick]
(p11) edge  (p31) 
(p21) edge  (p31) 
(p31) edge  (t)
;

\end{scope}

\begin{scope}[shift={(6,-3.2)}]
\node () at (0,2.5) {$\B_{12}{\equiv}\B_7\wedge \B_8$};
\node (p11) at (0,2) {$p_{12}$};
\node (p21) at (0.5,1) {$p_{22}$};
\node (p31) at (0,0) {$p_{32}$};
\node (t)  at (0,-1) {$\Tr$}; 
\node (f)  at (0.5,-1) {$\Fa$}; 

\path[->, thick]
(p11) edge  (p21) 
(p21) edge  (f) 
(p31) edge (f)
;  
 
\path[->,dotted, thick]
(p11) edge  (p31) 
(p21) edge  (p31) 
(p31) edge  (t)
;
\end{scope}

\begin{scope}[shift={(3.2,-2.7)}]

\node () at (0,2.5) {$\B_{11}{\equiv}\B_{10}\wedge \B_6$};

\node (p11) at (0,2) {$p_{11}$};
\node (p21) at (0,1) {$p_{21}$};
\node (p21_2) at (0.5,1) {$p_{21}$};
\node (p31) at (0.5,0) {$p_{31}$};
\node (t)  at (0,-1) {$\Tr$}; 
\node (f)  at (0.5,-1) {$\Fa$}; 

\path[->, thick]
(p11) edge  (p21_2) 
(p21) edge  (p31) 
(p21_2) edge[bend left=30]             (f)
(p31) edge (f)
;  
 
\path[->,dotted, thick]
(p11) edge  (p21) 
(p21_2) edge  (p31) 
(p21) edge  (t) 
(p31) edge  (t)
;
\end{scope}

\begin{scope}[shift={(-8,-6)}]

\node () at (0,2.5) {$\B_{13}{\equiv}\B_{12}\wedge \B_9$};
\node (p11) at (0,2) {$p_{12}$};
\node (p21) at (0,1) {$p_{22}$};
\node (p21_2) at (0.5,1) {$p_{22}$};
\node (p31) at (0.5,0) {$p_{32}$};
\node (t)  at (0,-1) {$\Tr$}; 
\node (f)  at (0.5,-1) {$\Fa$}; 

\path[->, thick]
(p11) edge  (p21_2) 
(p21) edge  (p31) 
(p21_2) edge[bend left=30]             (f)
(p31) edge (f)
;  
 
\path[->,dotted, thick]
(p11) edge  (p21) 
(p21_2) edge  (p31) 
(p21) edge  (t) 
(p31) edge  (t)
;

\end{scope}

\begin{scope}[shift={(-1,-7)}]

\node () at (0,2.5) {$\B_{16}{\equiv}\B_{11}{\land} \B_{13}$};

\node (p11) at (0,2) {$p_{11}$};
\node (p12_1) at (-0.5,1) {$p_{12}$};
\node (p12_2) at (0.5,1) {$p_{12}$};
\node (p21_1) at (-1.5,0) {$p_{21}$};
\node (p21_2) at (-0.5,0) {$p_{21}$};
\node (p21_3) at (0.5,0) {$p_{21}$};
\node (p21_4) at (2,0) {$p_{21}$};
\node (p22_1) at (0.5,-1) {$p_{22}$};
\node (p22_2) at (1.5,-1) {$p_{22}$};
\node (p22_3) at (-0.5,-1) {$p_{22}$};
\node (p22_4) at (-1.5,-1) {$p_{22}$};
\node (p31) at (0.5,-2.25) {$p_{31}$};
\node (p31_1) at (-0.5,-2.5) {$p_{32}$};
\node (p32) at (0.5,-3) {$p_{32}$};
\node (t)  at (-0.5,-4) {$\Tr$}; 
\node (f)  at (0.5,-4) {$\Fa$}; 

\path[->, thick]
(p11) edge  (p12_2) 
(p12_2) edge  (p21_4) 
(p12_1) edge  (p21_2) 
(p21_1) edge[bend right=20]             (p31)
(p21_2) edge  (p22_2)
(p21_3) edge[bend left=20]             (f)
(p21_4) edge[bend left=15]             (f)
(p22_1) edge  (p31)
(p22_2) edge[bend left=10]       (f)
(p22_3) edge   (f)
(p22_4) edge  (p31_1)
(p31_1) edge  (f)
(p32) edge  (f)
;  
 
\path[->,dotted, thick]
(p11) edge  (p12_1) 
(p12_1) edge  (p21_1) 
(p12_2) edge  (p21_3)
(p21_1) edge  (p22_4)
(p21_2) edge  (p22_3)
(p21_3) edge  (p22_1)
(p21_4) edge  (p22_2)
(p22_1) edge  (t)
(p22_2) edge  (p31)
(p31) edge  (p32)
(p22_3) edge  (p31_1)
(p22_4) edge  (t)
(p31) edge  (p32)
(p31_1) edge  (t)
(p32) edge  (t)
;
\end{scope}

\begin{scope}[shift={(3.8,-11.5)}]


\node () at (0,6.5) {$\B_{17}{\equiv}\B_{15}{\land} \B_{16}$};
 
\node (p11_1)  at (0,6) {$p_{11}$};   
\node (p12_1)  at (-0.5,5) {$p_{12}$};    
\node (p12_2)  at (0.5,5) {$p_{12}$};  
\node (p21_1)  at (-1.5,4) {$p_{21}$};  
\node (p21_2)  at (-0.5,4) {$p_{21}$};  
\node (p21_3)  at (1.5,4) {$p_{21}$};   
\node (p22_1)  at (-2,3) {$p_{22}$};  
\node (p22_2)  at (0,3) {$p_{22}$};  
\node (p22_3)  at (1,3) {$p_{22}$}; 
\node (p22_4)  at (2,3) {$p_{22}$};   
\node (p31_1)  at (0,2) {$p_{31}$};      
\node (p32_1)  at (0,1) {$p_{32}$};

\node (f)  at (0,-0.1) {$\Fa$}; 

\path[->,thick]
(p11_1) edge           (p12_1) 
(p12_1) edge           (p21_1)
(p12_2) edge           (p21_3)
(p21_1) edge[bend right=20]             (f)
(p21_2) edge[bend right=30]             (f)
(p21_3) edge           (p22_3)
(p22_1) edge[bend right=10]             (f)
(p22_2) edge           (p31_1)
(p22_3) edge[bend left=10]             (f)
(p22_4) edge[bend left=10]             (f)
(p31_1) edge[bend right=30]             (f)
(p32_1) edge[bend right=20]             (f)

 ;  
 
\path[->,dotted, thick]
(p11_1) edge           (p12_2) 
(p12_1) edge           (p21_2)
(p12_2) edge[bend left=10]             (f)
(p21_1) edge           (p22_1) 
(p21_2) edge           (p22_2)
(p21_3) edge           (p22_4)
(p22_1) edge           (p31_1) 
(p22_2) edge[bend right=35]             (f)
(p22_3) edge           (p31_1)
(p22_4) edge          (f)
(p31_1) edge           (p32_1)

(p32_1) edge[bend left=20]             (f)
 
  ;
  
 \node () at (2.5,3.5) {$\rightarrow$};  
 \node () at (3,4.2) {$\B^{\downarrow}_{17}$};
 \node (0)  at (3,3.5)  {$\F$};  
  
  \end{scope}

  \begin{scope}[shift={(8,-9)}]

  \end{scope}

\end{tikzpicture}

 \caption{ The OBDD refutation of  $\php_2$}
 \label{fig:php2}
\end{center}
\end{figure}
\end{example}
The following  resolution refutation simulates removing the nodes of the OBDD $\B_{17}$.

{ \small
 \begin{align*} 
  &p_{32}: & \lnot p_{12} \vee p_{31}  = \res(\lnot p_{12}\vee \lnot p_{32}, p_{31}\vee p_{32}) \\ 
   &   & \lnot p_{22} \vee p_{31}  = \res(\lnot p_{22}\vee \lnot p_{32}, p_{31}\vee p_{32})   
\end{align*}
}
{ \small
 \begin{align*} 
   &p_{31}:  & \lnot p_{11} \vee \lnot p_{12} =\res(\lnot p_{11}\vee \lnot p_{31}, \lnot p_{12} \vee p_{31} )  \\
 &    &\lnot p_{12} \vee \lnot p_{21} =\res(\lnot p_{21}\vee \lnot p_{31}, \lnot p_{12} \vee p_{31})   \\
 &     & \lnot p_{11} \vee \lnot p_{22} =\res(\lnot p_{11}\vee \lnot p_{31}, \lnot p_{22} \vee p_{31} )  \\
 &    &\lnot p_{21} \vee \lnot p_{22} =\res(\lnot p_{21}\vee \lnot p_{31}, \lnot p_{22} \vee p_{31})   
\end{align*}
}
{ \small
 \begin{align*} 
 &p_{22}: &  \lnot p_{11} \vee p_{21}    = \res(p_{21}\vee p_{22},  \lnot p_{11} \vee \lnot p_{22})    \\ 
 &   &  \lnot p_{12} \vee p_{21}    = \res(p_{21}\vee p_{22},  \lnot p_{12} \vee \lnot p_{22})     
\end{align*}
}
{ \small
 \begin{align*} 
& p_{21}: &  \lnot p_{11} = \res(\lnot p_{11}\vee \lnot p_{21},  \lnot p_{11} \vee p_{21})    \\
& & \lnot p_{12} = \res(\lnot p_{12}\vee \lnot p_{21},  \lnot p_{12} \vee p_{21})    
\end{align*}
}
{ \small
 \begin{align*} 
& p_{12}: &  p_{11} = \res(p_{11}\vee p_{12},  \lnot p_{12})     
\end{align*}
}
{ \small
 \begin{align*} 
& p_{11}: &  \bot  = \res(p_{11},  \lnot p_{11})   
\end{align*}
}

\section{The Main Result}\label{sec:main}

Now we establish   the main result   that any  OBDD refutation of an unsatisfiable CNF $\varphi$ of size $n$ can be simulated by 
a resolution refutation of $\varphi$ of size at most $O(n^2)$.

\begin{theorem}\label{theorem:main}
 Assume an unsatisfiable CNF $\varphi$. If there is an OBDD refutation of $\varphi$ of size $n$ then there is a resolution refutation of it of size 
 $O(|\cls(\varphi)|\cdot n)$.
\end{theorem}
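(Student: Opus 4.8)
The plan is to walk through the OBDD refutation $\B_1, \dots, \B_k$ from left to right and build a resolution derivation in parallel, maintaining the invariant that for every $i$ there is a CNF $\Phi_i$ with $\varphi \vdash_{\res} \Phi_i$ and $\Phi_i \less \B_i^{\downarrow}$. First I would treat the two kinds of steps separately. For an axiom step $\B_i \isomBDD C_i$ the diagram encodes a single clause and is already reduced, so I set $\Phi_i = C_i \in \cls(\varphi)$; these steps contribute exactly the $|\cls(\varphi)|$ initial clauses of the resolution refutation and require no resolvents. For a join step $\B_i \isomBDD \B_{j'}^{\downarrow} \wedge \B_{j''}^{\downarrow}$, Corollary \ref{lem:alg} immediately yields $\Phi_{j'} \wedge \Phi_{j''} \less \B_i$ for the unreduced conjunction, so the invariant holds for $\B_i$ before reduction. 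I then reduce $\B_i$ to $\B_i^{\downarrow}$ by merging and elimination: merging alters neither the semantics nor the set of $\fa$-paths and so needs no resolution step, while each elimination $\B \to_p \B'$ is simulated by selecting a $\psi \in \respairs(p,\cdot)$, which by definition provides resolvents with $(\cdot) \wedge \psi \less \B'$ and hence restores the invariant. Collecting all such $\psi$ gives $\Phi_i$.

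The decisive quantitative input is Corollary \ref{cor:ub}: elimination of a node in any intermediate OBDD of the refutation is simulated by at most $|\cls(\varphi)|$ resolution steps, and this bound does not deteriorate as the accumulated resolvents pile up. It then remains only to bound the total number of elimination steps. Axiom diagrams need no reduction, and for each join step Bryant's bound (Theorem \ref{theorem:size}) caps the number of merging and elimination steps needed to reduce $\B_i$ by $|\B_i|$. Summing over all steps, the total number of eliminations is at most $\sum_{i=1}^{k} |\B_i| = n$, the size of the OBDD refutation. Multiplying by the per-elimination cost bounds the resolvents by $|\cls(\varphi)| \cdot n$, and adding the $|\cls(\varphi)| \le n$ axioms gives a resolution refutation of size $O(|\cls(\varphi)| \cdot n)$.

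Finally I would check that this derivation really is a refutation. Since the OBDD refutation ends with $\B_k^{\downarrow} \isomBDD \fa$, the invariant gives $\Phi_k \less \fa$; the sole path of the $\fa$-diagram is the empty path, which can be falsified only by the empty clause, so $\bot \in \cls(\Phi_k)$, i.e. $\bot$ is among the derived clauses. A little bookkeeping linearises everything into a valid sequence $C_1, \dots, C_m$ in which each resolvent resolves two strictly earlier clauses, following the left-to-right order of the join steps together with the order in which eliminations are carried out inside each reduction. The main obstacle is not this counting but the correctness of the per-step simulation: one must ensure that the relation $\less$ is preserved simultaneously through \emph{both} the join (via Corollary \ref{lem:alg}) and the repeated eliminations (via the $\respairs$ construction and Lemma \ref{lemma:invariant}), and, most importantly, that the cost of each elimination remains bounded by the original $|\cls(\varphi)|$ rather than by the size of the ever-growing accumulated CNF $\Phi_i$ — which is precisely the invariant already secured by Corollary \ref{cor:ub}.
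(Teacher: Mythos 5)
Your proposal is correct and follows essentially the same route as the paper: the paper's own proof of Theorem \ref{theorem:main} is a two-line argument invoking Corollary \ref{cor:ub} for the per-elimination bound of $|\cls(\varphi)|$ resolution steps and the refutation size $n$ as the bound on the number of eliminations, which is exactly the quantitative core of your argument. Your additional bookkeeping --- the invariant $\Phi_i \less \B_i^{\downarrow}$ maintained through joins via Corollary \ref{lem:alg}, the observation that merging needs no resolution steps, the use of Theorem \ref{theorem:size} to count reduction steps, and the extraction of $\bot$ from $\B_k^{\downarrow} \isomBDD \fa$ --- is left implicit in the paper, so your version is simply a more explicit rendering of the same proof.
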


\begin{proof}  

By Corollary  \ref{cor:ub}, elimination  of a node can be simulated by at most $|\cls(\varphi)|$ steps. Since the OBDD refutation has size $n$, we obtain that  there is a resolution refutation of $\varphi$ of size 
 $O(|\cls(\varphi)|\cdot n)$.

\end{proof}

Now,  Theorem \ref{theorem:main2} stating that  if there is an OBDD refutation of $\varphi$ of  size $n$ then  there is a resolution refutation of $\varphi$ of  size  $O(n^2)$  follows  straightforwardly from Theorem \ref{theorem:main}.
 
\begin{proof}(Proof of Theorem \ref{theorem:main2})  We can assume without loss of generality that $\varphi$ is a minimally unsatsfiable CNF, that is removing any clause from $\varphi$ will result in a satisfiable CNF.

Then the  size of any OBDD refutation of $\varphi$ is  at least $|\cls(\varphi)|$, that is $|\cls(\varphi)|\leq n$. By Theorem \ref{theorem:main},  there is a resolution refutation of $\varphi$ of size  $O(k)$, where
$$k\leq |\cls(\varphi)|\cdot n\leq n^2$$

\end{proof}

\section{Conclusions}\label{sec:conclusion}
The main reason for this study comes from the interest in providing theoretical explanations of the relative efficiency of algorithms used in SAT solving.

In this paper we show  that resolution simulates OBDDs polynomially if we limit both to CNFs and thus answered the open question of Groote and Zantema posed in \cite{GZ2003} whether there exists unsatisfiable  CNFs having polynomial OBDD refutations and exponentially long resolution refutations. 

The goal of this study was to show the existence of such a polynomial simulation rather than provide the tightest upper bound. We envisage that an OBDD refutation can be simulated by resolution refutation with  linear increase in size but it would require a more elaborated proof.  

\section*{Acknowledgments}
The author thanks Erika \'Abrah\'am   for helpful discussions at the early stage of this study.

\end{document}